\crefname{hypothesis}{Hypothesis}{Hypotheses}
\title{Dynamics of an LPAA model for \textit{Tribolium} growth: Insights into population chaos\thanks{Initial submission 23 January 2024. Revisions submitted 4 June 2024. Accepted for publication in the SIAM Journal of Applied Mathematics 23 July 2024.
\funding{This work was supported in part by the National Institutes of Health (NIH) under Grant 5R01GM131405 (to YK and JN), the National Science Foundation (NSF) Rules of Life program DEB-1930728 (to SJB, TP, and YK),
eMB program DMS-2325146 (to SB, YK, and JN), and Los Alamos National Laboratory LDRD 20220791PRD2 (to TP).}}}
\author{Samantha J. Brozak\thanks{School of Mathematics and Statistical Sciences, Arizona State University, Tempe, AZ 
  (\email{sbrozak@asu.edu}, \email{kuang@asu.edu}, \email{John.Nagy@asu.edu}).}
\and Sophia Peralta\thanks{Department of Life Sciences, Scottsdale Community College, Scottsdale, AZ (\email{sophia.m.alme@gmail.com}).}
\and Tin Phan\thanks{Theoretical Biology and Biophysics, Los Alamos National Laboratory, Los Alamos, NM (\email{ttphan@lanl.gov}).}
\and John D. Nagy\footnotemark[2] \footnotemark[3]
\and Yang Kuang\footnotemark[2]}
\begin{document}

\maketitle

% REQUIRED
\begin{abstract}
Flour beetles (genus \textit{Tribolium}) have long been used as a model organism to understand population dynamics in ecological research. A rich and rigorous body of work has cemented flour beetles' place in the field of mathematical biology. One of the most interesting results using flour beetles is the induction of chaos in a laboratory beetle population, in which the well-established LPA (larvae-pupae-adult) model was used to inform the experimental factors which would lead to chaos. However, whether chaos is an intrinsic property of flour beetles remains an open question. Inspired by new experimental data, we extend the LPA model by stratifying the adult population into newly emerged and mature adults and considering cannibalism as a function of mature adults. We fit the model to longitudinal data of larvae, pupae, and adult beetle populations to demonstrate the model's ability to recapitulate the transient dynamics of flour beetles. We present local and global stability results for the trivial and positive steady states and explore bifurcations and limit cycles numerically. Our results suggest that while chaos is a possibility, it is a rare phenomenon within realistic ranges of the parameters obtained from our experiment, and is likely induced by environmental changes connected to media changes and population censusing.
\end{abstract}

% REQUIRED
\begin{keywords}
LPA model, Tribolium, flour beetle, discrete model, matrix model
\end{keywords}

% REQUIRED
\begin{MSCcodes}
37N25, 92B05
\end{MSCcodes}

\section{Introduction}
It is well known that limited resources will affect population growth, and that, on the other hand, organisms may affect their environment. Naturally, population fluctuations will be observed as this dance goes on---organisms will attempt to adapt to their environment while potentially modifying their environment and then responding to these changes. The potential for chaotic dynamics, characterized by aperiodic oscillations, was recognized in single-species ecological models by Robert May in 1974 \cite{may_1974}, spurring a hunt for chaos in natural populations.

In a landmark study, Costantino and colleagues induced chaotic dynamics in a laboratory population of red flour beetles \cite{costantino_1995,costantino_1997}. In this work, laboratory populations of \textit{Tribolium castaneum} (Family: Tenebrionidae) were manipulated in order to place their dynamics in the desired region of asymptotic behavior, such as convergence to a stable equilibrium or that of a limit cycle. This study spurred a decade-long exploration into the rich dynamics of the LPA (larvae-pupae-adult) model and the role of chaos in natural ecosystems. The ``Beetle Team", comprised of Jim Cushing, R. F. Costantino, Brian Dennis, Robert Desharnais, Shandelle Henson, Aaron King, and Jeffrey Edmunds, thoroughly characterized the dynamical behaviors of the LPA model and rigorously validated their theoretical results with experiments (see, for example, \cite{dennis_1995,edmunds_2003,king_2004,cushing_1998_saddle,henson_2002_basins}).

\textit{Tribolium}'s oscillatory population dynamics, as well as their adaptation to cohabitation with humans, made the insects an excellent candidate organism in the hunt for chaos. The humble flour beetle has long been used to study reproduction, population dynamics, evolution, genetics, and dispersal \cite{pointer_2021,park_1934}. The flour beetle's over 5,000-year history with infesting stored grains means the insects can easily adapt to their experimental settings \cite{andres_1931}. Numerous experiments by Thomas Park have solidified the place of \textit{Tribolium} in scientific literature \cite{park_1948,park_1954,park_1957}; his two-species competition experiments in particular spurred ecological and mathematical exploration into why the experiments did not end with a consistently dominant species \cite{park_1948,bartlett_1957,leslie_1962}. A thorough review of the contributions of flour beetles to ecology and biology may be found in \cite{pointer_2021}. \textit{Tribolium} have also been modeled mathematically prior to their most recent rise to fame; for example, competition between two \textit{Tribolium} species has been modeled using a system of ODEs, with Bartlett modeling competition from a predator-prey perspective rather than that of resource competition \cite{bartlett_1957}. Leslie modeled this competition using a stochastic two-stage model for each species, although finding the two-stage age-structure insufficient \cite{leslie_1962}. Renshaw studied the spatial distribution of flour beetles using a discrete ``stepping-stone" lattice model as well as analyzing a diffusion model \cite{renshaw_1980}.

Part of the mathematical interest in \textit{Tribolium} stems from nonlinear interactions between their life stages. Flour beetles are holometabolous (i.e., go through complete metamorphosis) and self-regulate their populations through cannibalism \cite{park_1934,brindley_1930,mertz_1970,pointer_2021}. It has been suggested that the periodic nature of \textit{Tribolium} populations is induced by cannibalistic behaviors \cite{pointer_2021} (in general, Veprauskas and Cushing showed that sufficiently intense cannibalism on juveniles prevents extinction in nutrient-poor environments \cite{veprauskas_2017}). Adult females lay between two to sixteen eggs per day \cite{sonleitner_1991,arnaud_2005}. These eggs may be cannibalized by larvae or adults; those that survive go through six larval instars \cite{park_1934}. After approximately fourteen to thirty days, larvae develop into pupae and remain in this state for seven to fourteen days \cite{park_1934,park_1935}. Pupae are immobile and have no protection, and may be consumed by adults and sometimes larvae \cite{benoit_1998}. As they emerge, adults are initially white to light brown in color and do not have a hardened exoskeleton (sclerotization); these individuals may be referred to as callow adults and will develop a sclerotized exterior within a few days \cite{park_1934}. Newly emerged adult females have approximately 20 times lower fecundity than their mature counterparts \cite{park_1934}. New adults may also be consumed by mature adults \cite{benoit_1998}. Development times depend on humidity as well as temperature, with optimal egg-to-adult development at 30$^\circ$C lasting 30 days \cite{howe_1960,park_1934,pointer_2021}. Faster development is associated with higher temperatures and higher relative humidity \cite{park_1934}. 

The standard LPA model is given by \cite{costantino_1995,cushing_1994}:
    \begin{align}\label{eqn:LPA}
        \begin{split}
            L(t+1) & = bA(t)e^{-c_1L(t) - c_2A(t)},\\
            P(t+1) &= \left(1-\mu_l\right)L(t),\\
            A(t+1) &= P(t)e^{-c_3A(t)} + \left(1-\mu_a\right)A(t).
        \end{split}
    \end{align}
The populations of larvae, pupae, and adults at time $t$ are denoted by $L(t)$, $P(t)$, and $A(t)$, respectively. Units of time are in two-week intervals, which roughly aligns with the development time of larvae \cite{costantino_1995,park_1934,park_1935}. Hence, $P(t)$ contains pupae as well as non-feeding larvae and callow adults \cite{costantino_1995}. The recruitment rate $b$ denotes the number of eggs that will hatch into larvae in the absence of cannibalism \cite{costantino_1995}. The parameter $0<\mu_i<1$ is the natural mortality probability for life stage $i\in\{l,a\}$, so that $(1-\mu_i)$ is the proportion of individuals in life stage $i$ that survive to the next stage \cite{costantino_1995}. The coefficients $c_1\geq0$, $c_2\geq0$, and $c_3\geq0$ represent the intensity of cannibalism of eggs by larvae and adults, respectively, as well as adults consuming pupae \cite{costantino_1995}. 

\begin{figure}
    \centering
    \includegraphics[trim={0 12cm 0 10cm},clip,width=0.75\textwidth]{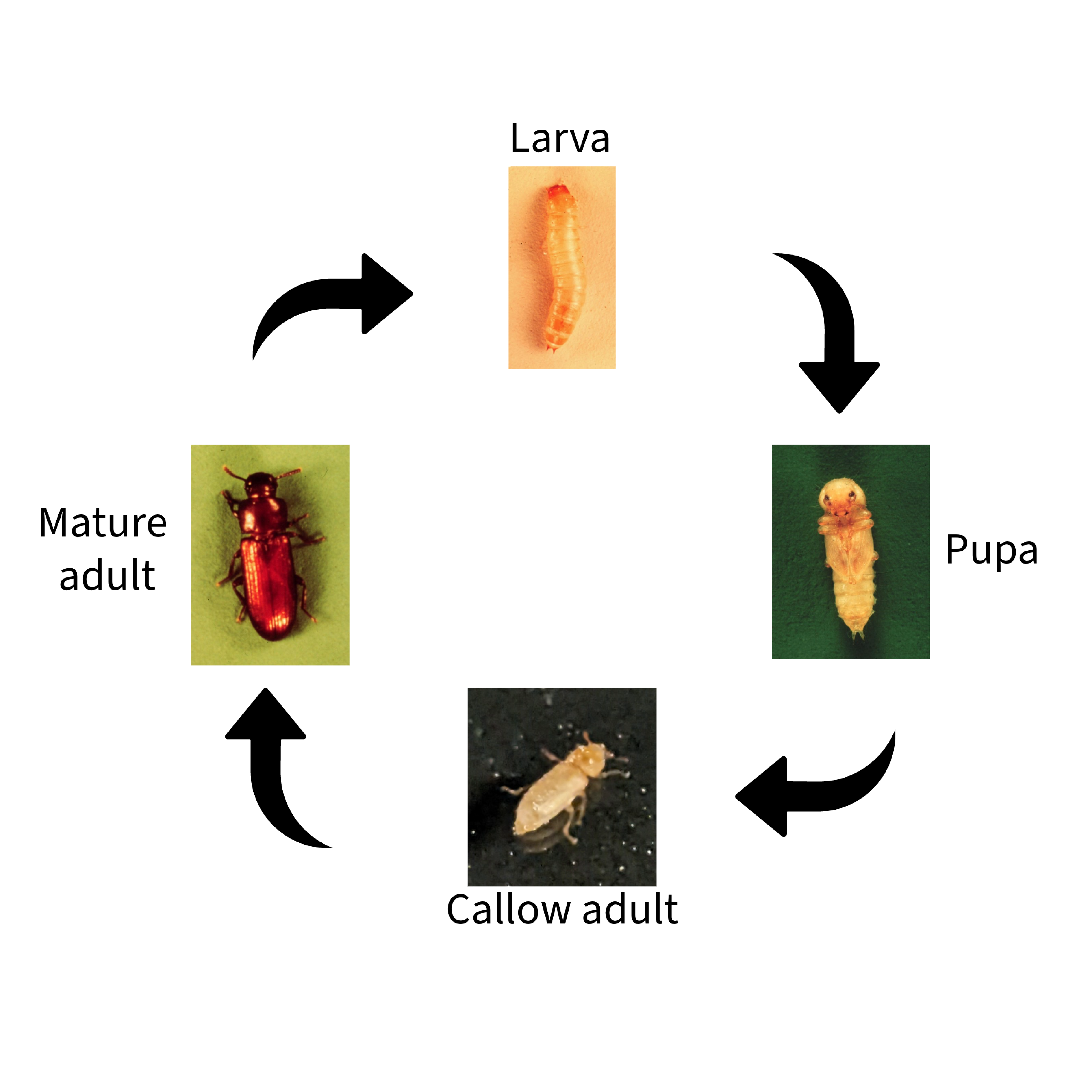}
    \caption{Life cycle of \textit{Tribolium} as depicted in the LPAA model \eqref{eqn:LPAA}. Images of larva, pupa, and mature adult taken from \cite{usda_ars}; image of callow adult provided by S. Alme.}
    \label{fig:lifecycle}
\end{figure}

The dynamics of the LPA model have been extensively characterized and validated with experimental data. In the absence of cannibalism, the model is linear and the population may go extinct or grow exponentially \cite{cushing_2001}. Cannibalism affects the asymptotic behavior of the system and induces complex nonlinear behaviors such as limit cycles and chaos. Cushing \cite{cushing_1994} showed the local stability of the extinction equilibrium as well as the stability of the positive steady state when the net reproductive threshold is near one. Furthermore, when the net reproductive threshold is greater than one, the system is uniformly persistent with respect to the extinction equilibrium. The model also has a global chaotic attractor which, for certain parameters, has an unstable saddle cycle of period eleven \cite{cushing_2001}. Kuang and Cushing \cite{kuang_1996}
derived thresholds for the global stability of the positive steady state in the absence of larval cannibalism on eggs. Desharnais and colleagues showed that small perturbations may have a significant effect on the amplitude of oscillations \cite{desharnais_2001}. In a special case of the model where adults live for two weeks, Cushing used synchronous orbits to study the existence of an invariant loop and cycle chains \cite{cushing_2003}.

The standard (deterministic) LPA model has been modified in several ways. Cushing and colleagues argue for the incorporation of demographic and environmental stochasticity \cite{cushing_2001,dennis_1995,dennis_1997,dennis_2001} and have also modified the model to account for a periodic-forcing environment \cite{costantino_1998,henson_1997,henson_2000}. For certain sets of parameters such as cannibalism rates and mortality, the LPA model predicted chaotic dynamics, which accurately reflected the data; this includes the prediction of invariant loops and equilibria \cite{dennis_1995,dennis_1997,dennis_2001,costantino_1995,costantino_1997}. A compilation of some mathematical results can be found in \cite{cushing_2004}.

Henson and Cushing proved the existence and stability of periodic cycles in the periodic LPA model, showing theoretically that a fluctuating environment results in increased population levels \cite{henson_1997}. These theoretical results were in line with experimental findings published by Jillson \cite{jillson_1980} and studied by Costantino et al. \cite{costantino_1998}. Henson et al. found in the periodic LPA model an unstable saddle cycle separated two stable 2-cycles, significantly affecting transient dynamics \cite{henson_1999}. Further, populations exhibiting oscillations in a constant environment may have more than one oscillatory final state in a periodic environment \cite{henson_2000,henson_2002_basins}.

Initially, we intended to study population dynamics in response to different environments using \textit{Tribolium confusum} (see \cref{app:exp}). Additional goals for this experiment were to decrease counting variability using protocols based on existing methods \cite{desharnais_1980,costantino_1995,costantino_1997}. \textit{T. confusum} has distinct physical life stage forms (i.e.: eggs, larvae, pupae, and adult) and experience behavioral shifts during the transition between each stage. We found it difficult to reproduce the protocols in previous works which grouped multiple life stages together, hence our desire to develop new protocols. 

After culturing flour beetles and recording their populations over several weeks, we attempted to parameterize the LPA model using our data; our results suggested that the dynamics exhibited in our laboratory may be better represented by a modified version of the LPA model. Hence, we present a four-dimensional discrete time map for the dynamics of \textit{Tribolium} beetles by accounting for the lower fecundity of newly emerged adults. We analyze the stability properties of the two steady states of the model and numerically study the bifurcations of the model. We were interested if an additional equation in the model would induce complexity, but found that our data supported the conclusion that chaos is not an inherent property of \textit{Tribolium}. Finally, we discuss differences in experimental methods and how they may contribute to model performance and generate hypotheses for further study.

\section{The LPAA model}\label{sec:LPAA}
The total beetle population is split into four mutually-exclusive compartments: larvae ($L$), pupae ($P$), newly emerged adults ($A_1$), and mature adults ($A_2$). Each time step spans two weeks, as done in previous work \cite{cushing_1998_nonlinear,costantino_1995}. Thus, the newly emerged adult compartment contains callow and newly sclerotized adults, and accounts for the significantly lower fecundity of newly emerged females \cite{park_1934}. As done by Park et al., we assume that it takes ten days for a newly-sclerotized adult to become sexually mature \cite{park_1968}. Hence, the two-week time step still holds for this formulation.  The discrete time model, which we will refer to as the LPAA model, is governed by the following difference equations:
    \begin{align}\label{eqn:LPAA}
        \begin{split}
            L(t+1) &= bA_2(t)e^{-c_1 A_2(t)},\\
            P(t+1) &= \left(1-\mu_l\right)L(t),\\
            A_1(t+1) &= \left(1-\mu_p\right)P(t),\\
            A_2(t+1) &= A_1(t)e^{-c_2 A_2(t)} + \left(1-\mu_a\right)A_2(t).
        \end{split}
    \end{align}
    
Similar to the LPA model \eqref{eqn:LPA}, cannibalistic interactions are modeled using a binomial distribution \cite{dennis_1995}. The strength of the cannibalistic interactions are described by the coefficients $c_1 \geq0$ (mature adults consuming larvae) and $c_2\geq0$ (mature adults consuming newly emerged adults). We assume that encounters between mature adults and an egg or newly emerged adult occur at random, with the probability that the egg or immature adult survives given by $(1-c_j)^{A_2(t)} \approx \exp{(-c_j A_2(t))}$ for $j = 1,2$ \cite{dennis_1995}. It is assumed that the only losses of eggs and immature adults are due to cannibalistic interactions with mature adults. Total or near-total survival of callow adults in the absence of cannibalism has been observed by Park et al. \cite{park_1968}.

While the LPAA model shares many parameters and functional forms with the LPA model, there are several key differences. Larval cannibalism on eggs and adult cannibalism on pupae are not included in the model; however, mature adult cannibalism on newly emerged adults is incorporated. While we extensively compared variations of the LPA and LPAA models which included various cannibalistic interactions (for example, adult cannibalism on pupae), the best fit was obtained when only adults partook in cannibalism. This result suggests the testable hypothesis that only these sources of cannibalism contribute significantly to the dynamics.

\section{Fitting to experimental data}
We investigated the influence of salts and large inorganic compounds at small concentrations on the population dynamics of \textit{T. confusum} over 20 weeks. This work is part of a larger ongoing study working to understand population dynamics in the context of multiple resource limitation from a stoichiometric perspective. Specifically, we aimed to understand how laboratory populations of \textit{Tribolium confusum} responded to changes to their environment in the form of varying nitrogen and phosphorus ratios. Details of our experimental protocols may be found in \cref{app:exp}. 

No significant qualitative differences were observed between experimental subgroups. We first fit the LPA model to our data, shown in the left panels of \cref{fig:Nub05} and \cref{fig:Pb05}. However, the estimated parameters were not biologically sensible, as shown in \cref{tab:param}. Adjusting the parameter constraints, the model failed to produce a good fit. This motivated a principled approach in developing an alternative, biologically-reasonable model that would better describe our results. After rigorously comparing many biologically-feasible models to our experiments, we found that the LPAA model best recapitulated our time-series data. 

Fittings to our experimental data are shown in \cref{fig:Nub05} and \cref{fig:Pb05}; we highlight these two cases since the other experimental subgroups had similar results\footnote{Data and other fittings are available at \url{https://github.com/sjbrozak/Tribolium-LPAA}.}. We minimize the weighted sum-squared residuals between the data and the one-step forecasts of the model. All datasets are weighted equally. The initial conditions of the juvenile stages were determined by the value of the data at the indicated time step. Since the adults have significantly longer lifespans than juveniles, we calculate the initial values of the adult populations using
    \begin{align*}
        A_1(j) &= data(j) - data(j-1),\\
        A_2(j) &= data(j-1)
    \end{align*}
where $data(j)$ indicates the number of adults at the $j$-th time step, $2\leq j\leq 9$.

One-step forecasts are shown in \cref{fig:Nub05}  and \cref{fig:Pb05} for two experimental groups which we feel are characteristic of the overall data. While the weighted sum of squared residuals are generally comparable between models (see \cref{tab:obj_vals}), estimated parameter values for the LPAA model lie in expected biological ranges \cite{sonleitner_1991,arnaud_2005}. The estimated values for the larval recruitment parameter $b$ indicate a poor fit for the LPA model. All experimental subgroups had biologically-reasonable LPAA parameters with the exception of the P 0.5\% bleached group, in which both the LPA and LPAA models performed poorly. QQ plots of the one-step residuals show approximately straight lines, indicating normally-distributed residuals and goodness of fit (see \cref{app:qq}). 

We hypothesize that, aside from using different \textit{Tribolium} species, different counting and sorting procedures also contribute to variation in model performance. For example, Desharnais and Liu combined large larvae and pupae to match the time steps of the LPA model \cite{desharnais_1987}, which we were unable to reproduce as thresholds for sorting were indeterminable.

\begin{table}[htbp]
\footnotesize
\caption{Parameter definitions. $^*$Larval mortality probability $\mu_l$ was estimated directly from the data.}\label{tab:param_desc}
\begin{center}
  \begin{tabular}{cll} \hline
        Parameter & Definition (unit) & Source\\\hline
        $b$ & Larval recruitment (larvae per adult) & Fitted \\
        $\mu_l$ & Proportion of larvae lost due to natural mortality & *\\
        $\mu_p$ & Proportion of pupae lost due to natural mortality & Fitted\\
        $\mu_a$ & Proportion of pupae lost due to natural mortality & Fitted\\
        $c_1$ & Cannibalism of mature adults on eggs (per mature adult) & Fitted\\
        $c_2$ & Cannibalism of mature adults on immature adults (per mature adult) & Fitted\\\hline
    \end{tabular}
\end{center}
\end{table}

\begin{table}
\footnotesize
\caption{Best fit parameter values. Median values are reported with minimums (maximums). References are listed for reported ranges found or assumed in the literature. It should be noted that, in the LPA model \eqref{eqn:LPA}, $c_1$ and $c_2$ correspond to cannibalism of eggs by larvae and adults. In the LPAA model \eqref{eqn:LPAA}, $c_1$ corresponds to cannibalism of eggs by mature adults and $c_2$ corresponds to cannibalism of newly emerged adults by mature adults. Larval mortality $\mu_l$ is estimated directly from the data, while all other parameters are fit.}
\begin{center}
{\begin{tabular}{llllll} \hline
 & \multicolumn{2}{c}{LPA} & & \multicolumn{2}{c}{LPAA} \\ \cmidrule{2-3} \cmidrule{5-6}
 Par. & Med. & (Min, Max) &  & Med. & (Min, Max) \\ \midrule
 $b$ & 20& (20, 20)&  & 6.4232& (4.2781, 20)\\
 $\mu_l$ & 0.6053 & (0.5253, 0.6739)&  & 0.6053 & (0.5253, 0.6739)\\
 $\mu_p$ &  &  &  & $2.64\text{e-}{12}$& ($1.21\text{e-}{12}$, $2.75\text{e-}{11}$)\\ 
 $\mu_a$ & 0.0842& (0.0353, 0.1039)& & 0.0358& (0, 0.0948)\\
 $c_1$ &0.0179& (0.0089, 0.0209)&  & 0.0099& (0.0066, 0.017)\\
 $c_2$ & 0.0003& (0, 0.0097)& & 0.0028& (0.0014, 0.0050)\\
 $c_3$ & $1.0760\text{e-}{13}$& ($1.8714\text{e-}{14}$, $7.922\text{e-}{5}$)& & &  \\
 \hline
\end{tabular}}
\label{tab:param}
\end{center}
\end{table}

\begin{table}[htbp]
\footnotesize
\caption{Objective values by experimental group. The only group with a poor parameterization (indicated by estimated parameters at the top of biological constraints) is the P 0.5\% bleached group.}\label{tab:obj_vals}
\begin{center}
  \begin{tabular}{lccc} \hline
        Group & LPA & LPAA & \% improvement \\\hline
        N 0.5\% unbleached & $\mathbf{3.68\times10^4}$& $3.98\times10^4$ & -8.15\% \\
        N 0.5\% bleached & $4.88\times10^4$ & $\mathbf{4.05\times10^4}$ & +16.91\%\\
        N 1\% unbleached & $\mathbf{3.50\times10^4}$ & $3.91\times10^4$& -11.71\%\\
        N 1\% bleached & $2.98\times10^4$ & $\mathbf{2.25\times10^4}$& +24.50\%\\
        \hline
        P 0.5\% unbleached & $3.31\times10^4$ & $\mathbf{2.97\times10^4}$& +10.2\%\\
        P 0.5\% bleached & $\mathbf{6.99\times10^4}$& $8.12\times10^4$& -16.17\%\\
        P 1\% unbleached & $\mathbf{5.66\times10^4}$& $6.09\times10^4$& -7.5972\%\\
        P 1\% bleached & $4.76\times10^4$& $\mathbf{4.11\times10^4}$ & +13.66\%\\      
        \hline
    \end{tabular}
\end{center}
\end{table}

\begin{figure}
    \centering
    \includegraphics[width=\textwidth]{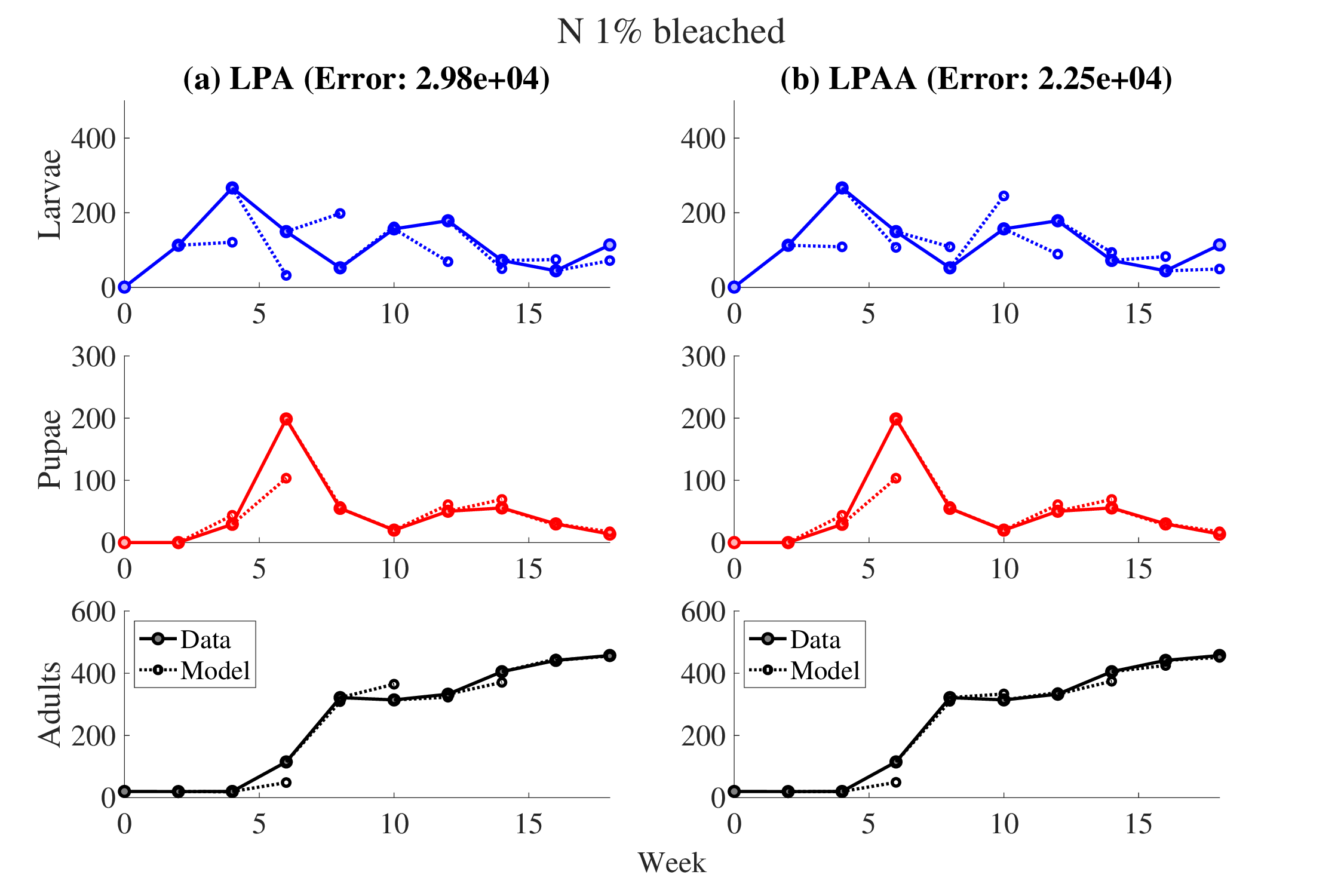}
    \caption{Left panels: LPA model fit to experimental group with 1\% nitrogen in bleached flour. Right panels: LPAA model fit to the same data. The weighted SSE is reported for each model.}
    \label{fig:Nub05}
\end{figure}

\begin{figure}
    \centering
    \includegraphics[width=\textwidth]{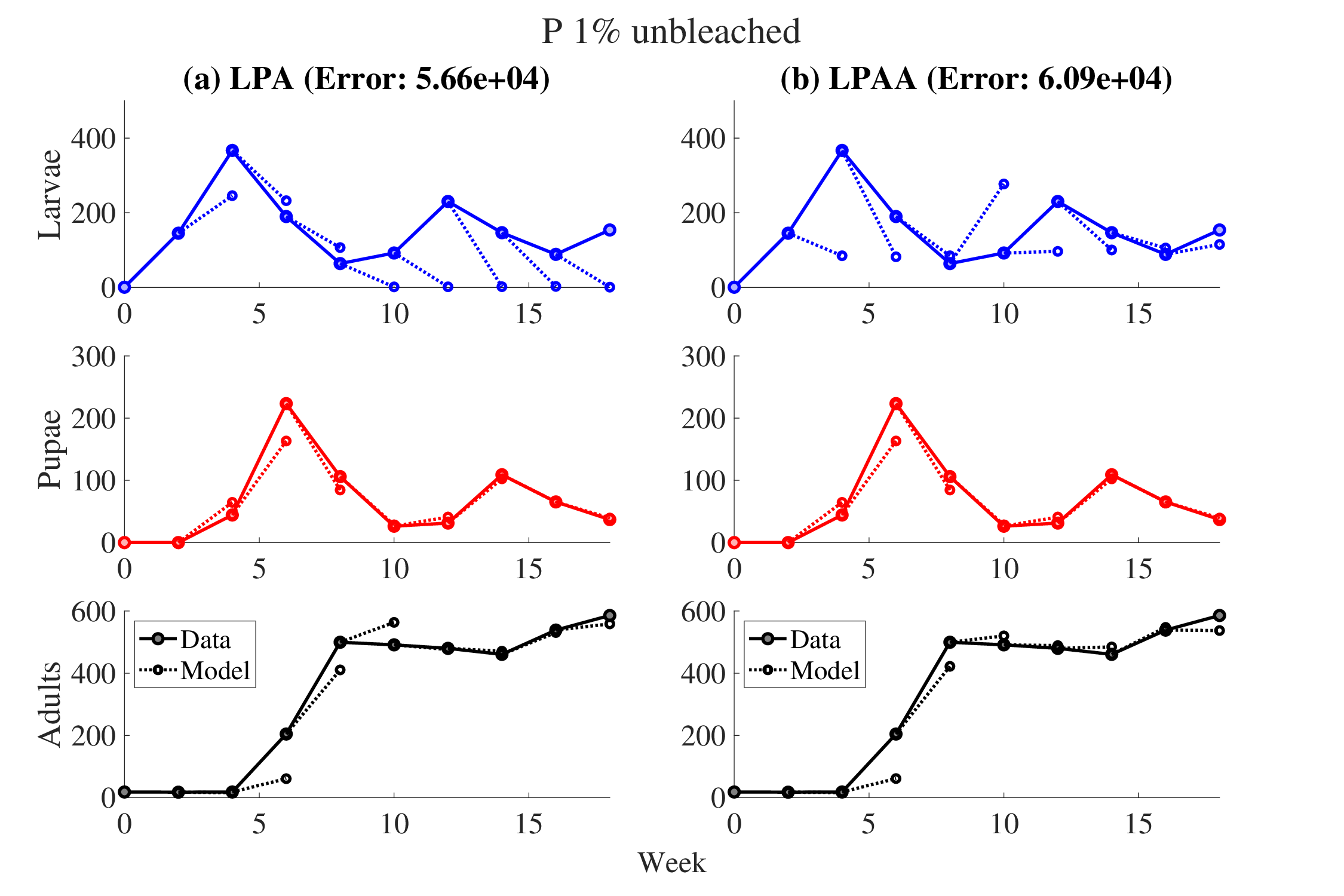}
    \caption{Left panels: LPA model fit to experimental group with 1\% phosphorus in unbleached flour. Right panels: LPAA model fit to the same data. The weighted SSE is reported for each model.}
    \label{fig:Pb05}
\end{figure}

\section{Basic model properties}
The LPAA model presented above is autonomous with the corresponding projection matrix
    \[
        \mathbf{P}(\mathbf{x}(t)) = \begin{bmatrix}
              0 & 0 & 0 & be^{-c_1 A_2(t)}\\
              1-\mu_l & 0 & 0 & 0\\
              0 & 1-\mu_p & 0 & 0\\
              0 & 0 & e^{-c_2 A_2(t)} & 1-\mu_a\\
    \end{bmatrix},
    \]
where $\mathbf{x}(t) = (L(t),\; P(t),\; A_1(t),\; A_2(t))^T$.
\begin{lemma}
    The LPAA model \eqref{eqn:LPAA} is positively-invariant for the region
        \[\Omega = \left\{\left(L,P,A_1,A_2\right)\in\mathbb{R}^4_+:L\leq \hat{L} , P\leq\hat{P}, A_1\leq\hat{A}_1\, A_2\leq \frac{\hat{A}_1}{\mu_a} + A_2(0) \right\}\]
    where $\hat{L} = \frac{b}{ec_1}$, $\hat{P} = \frac{b\left(1-\mu_l\right)}{ec_1}$, and $\hat{A}_1 = \frac{b\left(1-\mu_l\right)\left(1-\mu_p\right)}{ec_1}$.
 \end{lemma}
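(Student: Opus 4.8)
The plan is to reduce positive invariance to the single one-step implication $\mathbf{x}(t)\in\Omega \Rightarrow \mathbf{x}(t+1)\in\Omega$, from which the statement for all future times follows by induction on $t$. Assuming $\mathbf{x}(t)=(L(t),P(t),A_1(t),A_2(t))\in\Omega$, I would verify the five defining conditions of $\Omega$ for $\mathbf{x}(t+1)$ in turn: nonnegativity, and then the four coordinate bounds in the order $L,P,A_1,A_2$. Nonnegativity is immediate, since $0<\mu_i<1$ makes each survival factor $1-\mu_l,1-\mu_p,1-\mu_a$ positive and every right-hand side of \eqref{eqn:LPAA} is a sum of products of nonnegative terms (with $e^{-c_jA_2}>0$), so $\mathbb{R}^4_+$ is preserved.

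For the $L$-bound the key observation is that the scalar recruitment map $g(s)=bs\,e^{-c_1 s}$ attains its maximum over $s\ge 0$ at $s=1/c_1$, with $g(1/c_1)=b/(ec_1)=\hat{L}$; hence $L(t+1)=g(A_2(t))\le\hat{L}$ for \emph{every} $A_2(t)\ge 0$, so this bound is automatic and does not even use the inductive hypothesis. The $P$- and $A_1$-bounds are then one-line consequences of the linear updates together with the hypotheses $L(t)\le\hat{L}$ and $P(t)\le\hat{P}$: namely $P(t+1)=(1-\mu_l)L(t)\le(1-\mu_l)\hat{L}=\hat{P}$ and $A_1(t+1)=(1-\mu_p)P(t)\le(1-\mu_p)\hat{P}=\hat{A}_1$.

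The hard part will be the $A_2$-bound, the only genuinely recursive constraint, since the term $(1-\mu_a)A_2(t)$ grows with $A_2(t)$ and so the hypothesis $A_2(t)\le \hat{A}_1/\mu_a+A_2(0)$ must actually be invoked. Using $e^{-c_2A_2(t)}\le 1$ and $A_1(t)\le\hat{A}_1$ in the last equation of \eqref{eqn:LPAA}, I would first reduce to the affine inequality $A_2(t+1)\le \hat{A}_1+(1-\mu_a)A_2(t)$. Substituting the inductive bound and simplifying via the identity $1+(1-\mu_a)/\mu_a=1/\mu_a$ collapses the two $\hat{A}_1$-terms into $\hat{A}_1/\mu_a$, leaving $A_2(t+1)\le \hat{A}_1/\mu_a+(1-\mu_a)A_2(0)\le \hat{A}_1/\mu_a+A_2(0)$, where the final step uses $1-\mu_a\le 1$. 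This closes the induction. The conceptual reason the constant $\hat{A}_1/\mu_a$ is the correct one is that it is precisely the fixed point of the affine map $u\mapsto \hat{A}_1+(1-\mu_a)u$ governing the $A_2$ dynamics once the other coordinates are saturated at their maxima; the additive $A_2(0)$ offset merely absorbs the transient, making the region trajectory-specific but invariant.
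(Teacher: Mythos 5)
Your proposal is correct and follows essentially the same route as the paper: nonnegativity from the nonnegative update, the bound $\hat{L}=b/(ec_1)$ from maximizing $bse^{-c_1s}$, the chained linear bounds for $P$ and $A_1$, and the affine recursion $A_2(t+1)\le \hat{A}_1+(1-\mu_a)A_2(t)$ for the last coordinate. The only (cosmetic) difference is that the paper unrolls this recursion and sums the geometric series to get $\hat{A}_1/\mu_a + A_2(0)$, whereas you close a one-step induction using the fixed point $\hat{A}_1/\mu_a$ of the affine map --- your algebra $\hat{A}_1+(1-\mu_a)\hat{A}_1/\mu_a=\hat{A}_1/\mu_a$ checks out and arguably gives a cleaner positive-invariance argument.
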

\begin{proof}
    Because the projection matrix is non-negative with non-negative initial conditions, solutions of the model exist uniquely and remain non-negative \cite{cushing_1998_cbms}. To show boundedness, observe that
        \[L(t) \leq \hat{L} := \max\left\{bxe^{-c_1x}:x>0\right\},\]
    where the maximum $\hat{L}=\frac{b}{ec_1}$ exists at $x=1/c_1$. Hence we may obtain the following upper bounds on pupae and immature adults,
        \[P(t)\leq \hat{P}:= \left(1-\mu_l\right)\hat{L}\]
    and
        \[A_1(t) \leq \hat{A}_1 := \left(1-\mu_p\right)\hat{P}.\]
    We now show that $A_2(t)$ is bounded by iterating the equation for mature adults. Observe
        \begin{align*}
            A_2(1) &\leq \hat{A}_1 + \left(1-\mu_a\right)A_2(0),\\
            A_2(2) &\leq \hat{A}_1 + \left(1-\mu_a\right)\hat{A}_1 + \left(1-\mu_a\right)^2A_2(0),\\
            \vdots\phantom{(3)} & \leq \phantom{asdfghj} \vdots\\
            A_2(n) & \leq \hat{A}_1 \sum_{i=0}^{n-1} \left(1-\mu_a\right)^i + \left(1-\mu_a\right)^nA_2(0).
        \end{align*}
    which implies
        \begin{align*}
            A_2(n) &\leq \hat{A}_1 \left(\frac{1- \left(1-\mu_a\right)^n}{1-\left(1-\mu_a\right)}\right) + \left(1-\mu_a\right)^nA_2(0)\\
            &\leq \frac{\hat{A}_1}{\mu_a} + A_2(0).
        \end{align*}
    Thus $A_2(t)$ is bounded, and subsequently all compartments are bounded for all time given non-negative initial data.
\end{proof}

The model \eqref{eqn:LPAA} attains two steady states: a trivial (extinction) steady state $\mathbf{E}^0 = (0,0,0,0)^T$ and a unique positive steady state given by $\mathbf{E}^* = (L^*, P^*,A_1^*,A_2^*)^T$, which satisfies 
    \begin{alignat*}{2}
       L^* &= bA_2^*e^{-c_1 A_2^*}, \qquad \qquad &&P^* =\left(1-\mu_l\right)L^*,\\
       A_1^* &= \left(1-\mu_p\right)P^*, && A_2^* = A_1^*e^{-c_2A_2^*} + \left(1-\mu_a\right)A_2^*.
    \end{alignat*}
We may obtain a closed form expression for $\mathbf{E}^*$ by substituting $L^*$ and $P^*$ into $A_1^*$, and we find
    \[A_1^* = bA_2^*\left(1-\mu_l\right)\left(1-\mu_p\right)e^{-c_1A_2^*}.\]
Assuming $A_2^*\neq0$ (lest the model stays at the extinction equilibrium) and substituting this new expression for $A_1^*$ into $A_2^*$,
    \[A_2^* = \frac{1}{c_1 + c_2} \ln{\left(\frac{b\left(1-\mu_l\right)\left(1-\mu_p\right)}{\mu_a} \right)}.\]
Clearly, the positive steady state exists uniquely when $\ln{\left(\frac{b\left(1-\mu_l\right)\left(1-\mu_p\right)}{\mu_a} \right)}>0$, and so we can define
    \[R_0 := \frac{b\left(1-\mu_l\right)\left(1-\mu_p\right)}{\mu_a}>1\]
as the condition necessary for existence of the positive steady state. Biologically, this threshold gives the average number of offspring of a single adult that survive to adulthood.

\section{Stability of the extinction steady state}
We present the conditions required for local and global stability of the extinction steady state.

\begin{theorem}
    The extinction equilibrium $\mathbf{E}^0$ is locally and globally asymptotically stable when $R_0<1$, and unstable when $R_0>1$.
\end{theorem}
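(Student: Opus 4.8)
The plan is to reduce all three assertions to a single spectral threshold for the Jacobian of the map at the origin. First I would linearize: evaluating the projection matrix at $\mathbf{E}^0$ and discarding the derivative terms that carry a factor of $A_1$ or $A_2$ (which vanish there) shows that the Jacobian $J := \mathbf{P}(\mathbf{E}^0)$ is exactly the constant nonnegative matrix obtained by setting the exponentials to $1$. Expanding $\det(J-\lambda I)$ along the first row gives the characteristic polynomial $p(\lambda) = \lambda^4 - (1-\mu_a)\lambda^3 - b(1-\mu_l)(1-\mu_p)$. Substituting $b(1-\mu_l)(1-\mu_p) = \mu_a R_0$ yields the key identity $p(1) = \mu_a(1 - R_0)$, so the sign of $p(1)$ is controlled entirely by whether $R_0$ lies below or above $1$.

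Second, I would identify the spectral radius $\rho(J)$ with a single real root and pin down its position relative to $1$. Because the life-cycle digraph of $J$ is strongly connected and carries the self-loop $A_2 \to A_2$ coming from the $(1-\mu_a)$ term, $J$ is primitive; Perron--Frobenius then provides a simple dominant eigenvalue $\lambda^* = \rho(J)$ that is real, positive, and strictly larger in modulus than every other eigenvalue. To locate $\lambda^*$, I note that $p(0) = -\mu_a R_0 < 0$ and that $p'(\lambda) = \lambda^2\big(4\lambda - 3(1-\mu_a)\big)$ has its only positive zero at $\lambda_c = 3(1-\mu_a)/4 < 1$, so $p$ decreases then increases on $(0,\infty)$ and has exactly one positive root $\lambda^*$, necessarily with $\lambda^* > \lambda_c$. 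Since $p$ is strictly increasing on $[\lambda_c,\infty)$ and both $\lambda^*$ and $1$ lie in this interval, $\operatorname{sign}(1-\lambda^*) = \operatorname{sign}\big(p(1)\big) = \operatorname{sign}(\mu_a(1-R_0))$; hence $\rho(J) < 1$ exactly when $R_0 < 1$ and $\rho(J) > 1$ exactly when $R_0 > 1$, with equality at $R_0 = 1$ (excluded by the strict hypotheses). Alternatively, this is precisely the net-reproductive-number criterion for matrix population models, since $R_0 = \rho\big(F(I-T)^{-1}\big)$ for the fertility/transition splitting $J = F + T$.

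Third come the dynamical conclusions. For the local statement and for instability I would invoke the standard linearization theorem for maps: $\rho(J) < 1$ gives local asymptotic stability of $\mathbf{E}^0$ when $R_0 < 1$, while the eigenvalue $\lambda^* > 1$ forces instability when $R_0 > 1$. For global stability I would run a linear comparison argument. Since $e^{-c_1 A_2} \le 1$ and $e^{-c_2 A_2} \le 1$ on $\mathbb{R}^4_+$, the nonlinear map $F$ satisfies $F(\mathbf{x}) \le J\mathbf{x}$ componentwise, and $J$, being nonnegative, is monotone; an induction then gives $\mathbf{0} \le \mathbf{x}(t) \le J^t \mathbf{x}(0)$ for all $t$. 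When $R_0 < 1$ we have $\rho(J) < 1$, so $J^t \to 0$ and the squeeze theorem yields $\mathbf{x}(t) \to \mathbf{0}$, upgrading local to global asymptotic stability.

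The main obstacle is establishing the threshold equivalence between $\rho(J)$ and $R_0$ (that $\rho(J) < 1$ exactly when $R_0 < 1$, and $\rho(J) > 1$ exactly when $R_0 > 1$): the identity $p(1) = \mu_a(1-R_0)$ is suggestive but by itself only says $1$ lies to one side of $\lambda^*$, so the real work is (i) confirming via primitivity that the relevant root is genuinely the spectral radius and that no complex eigenvalue escapes the unit disk when $\lambda^* < 1$, and (ii) using the monotonicity of $p$ beyond $\lambda_c$ to convert the sign of $p(1)$ into the correct inequality for $\lambda^*$. Once this equivalence is secured, the local, unstable, and global pieces all follow routinely.
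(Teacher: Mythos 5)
Your proposal is correct and follows essentially the same route as the paper: Perron--Frobenius applied to the linearization $J=\mathbf{P}(\mathbf{E}^0)$ at the origin, the threshold equivalence between $\rho(J)<1$ and $R_0<1$, and a componentwise comparison of the nonlinear map with $\mathbf{x}\mapsto J\mathbf{x}$ to upgrade local to global stability. The only difference is one of detail rather than method: you verify the threshold equivalence explicitly via the characteristic polynomial $p(\lambda)=\lambda^4-(1-\mu_a)\lambda^3-b(1-\mu_l)(1-\mu_p)$ and its monotonicity past $\lambda_c=3(1-\mu_a)/4$, whereas the paper outsources this step (and the comparison argument) to Cushing's general net-reproductive-number theorem; your computations check out.
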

    
\begin{proof}
Recall that the population projection matrix is given by
    \[ \mathbf{P}(\mathbf{x}(t)) = 
        \begin{bmatrix}
              0 & 0 & 0 & be^{-c_1 A_2(t)}\\
              1-\mu_l & 0 & 0 & 0\\
              0 & 1-\mu_p & 0 & 0\\
              0 & 0 & e^{-c_2 A_2(t)} & 1-\mu_a
              \end{bmatrix}.
    \]
The inherent projection matrix, which describes dynamics of small populations, is given by the Jacobian at the trivial equilibrium, or $\mathbf{P}(\mathbf{E}^0)$ \cite{cushing_1998_cbms}. Observe that a non-negative matrix $A$ is primitive and irreducible if and only if $A^m>0$ for some positive integer $m$ \cite[Ch. 13, section 5]{gantmacher_2000}. The matrix $(\mathbf{P}(\mathbf{E}^0))^m>0$ for $m=6$. We are now equipped to apply the Perron-Frobenius theorem, restated here from \cite[p. 182]{hofbauer_1998}.

\begin{theorem}[Perron-Frobenius]
    If $M$ is an $n\times n$ non-negative matrix, there exists a unique non-negative eigenvalue $\lambda$ which is dominant in the sense that $|\mu|\leq\lambda$ for all other eigenvalues $\mu$ of $M$. There exists right and left eigenvectors $u\geq0$ and $v\geq0$ such that $Mu=\lambda u$ and $vM=\lambda v$. If $M$ is primitive and irreducible, then $\lambda$ is simple and positive, $u$ and $v$ are unique and positive, and $|\mu|<\lambda$.
\end{theorem}

By the Perron-Frobenius theorem, $\mathbf{P}(\mathbf{E}^0)$ has a positive, algebraically simple, strictly dominant eigenvalue. Furthermore, this eigenvalue is less than one if and only if $R_0<1$, hence the extinction steady state is locally asymptotically stable when this condition holds  \cite[Theorem 1.1.3, p. 10]{cushing_1998_cbms}. Note that, element-wise,

    \[
    \begin{bmatrix}
              0 & 0 & 0 & be^{-c_1 A_2(t)}\\
              1-\mu_l & 0 & 0 & 0\\
              0 & 1-\mu_p & 0 & 0\\
              0 & 0 & e^{-c_2 A_2(t)} & 1-\mu_a
    \end{bmatrix} \leq
    \begin{bmatrix}
              0 & 0 & 0 & b\\
              1-\mu_l & 0 & 0 & 0\\
              0 & 1-\mu_p & 0 & 0\\
              0 & 0 & 1 & 1-\mu_a
              \end{bmatrix}.
    \]
Because the projection matrix is less than or equal to the inherent projection matrix for all $(L,P,A_1,A_2)\in \mathbb{R}^4_+$, a comparison argument shows that the extinction equilibrium is globally asymptotically stable when $R_0<1$ \cite{cushing_1998_cbms}.
\end{proof}

\section{Local stability of the positive steady state}
\begin{theorem}
    The positive steady state is locally-asymptotically stable when
    $$1<R_0<\min\left\{\exp\left(1 + \frac{c_2}{c_1}\right),\exp\left(\frac{1-\mu_a}{\mu_a}\left(1 + \frac{c_1}{c_2}\right)\right)\right\}.$$
\end{theorem}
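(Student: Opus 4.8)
The plan is to linearize the map \eqref{eqn:LPAA} at $\mathbf{E}^*$ and show that every eigenvalue of the Jacobian lies strictly inside the unit disk. First I would differentiate the four update rules; because $L(t+1)$ depends only on $A_2$, and $A_2(t+1)$ depends only on $A_1$ and $A_2$, the Jacobian $J$ inherits a cyclic sparsity pattern whose nonzero entries are $\partial_{A_2}(L') = be^{-c_1A_2}(1-c_1A_2)$, the three survival constants $1-\mu_l$, $1-\mu_p$, $e^{-c_2A_2}$, and the self-loop $\partial_{A_2}(A_2') = (1-\mu_a)-c_2A_1e^{-c_2A_2}$. Evaluating at $\mathbf{E}^*$, I would use the steady-state relation obtained from the fourth equation, namely $A_1^*e^{-c_2A_2^*}=\mu_aA_2^*$, to collapse the self-loop entry to $\delta:=(1-\mu_a)-\mu_ac_2A_2^*$.

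Next I would exploit the cyclic structure: expanding $\det(\lambda I-J)$ along the first row leaves only two surviving terms, so the characteristic polynomial collapses to
\[
p(\lambda)=\lambda^4-\delta\lambda^3-K,
\]
with no quadratic or linear term, where $K:=\mu_a(1-c_1A_2^*)$ after simplifying the product of off-diagonal entries via $A_1^*=bA_2^*(1-\mu_l)(1-\mu_p)e^{-c_1A_2^*}$ together with $A_1^*e^{-c_2A_2^*}=\mu_aA_2^*$. The decisive observation is that, writing $A_2^*=\ln R_0/(c_1+c_2)$, the first bound $R_0<\exp(1+c_2/c_1)$ is exactly $c_1A_2^*<1$, i.e.\ $K>0$, while the second bound $R_0<\exp\left(\frac{1-\mu_a}{\mu_a}\left(1+\frac{c_1}{c_2}\right)\right)$ is exactly $\mu_ac_2A_2^*<1-\mu_a$, i.e.\ $\delta>0$. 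Combined with $R_0>1$, these give the key identity $K+\delta=1-\mu_a\ln R_0$, whence $0<K<1$, $0<\delta<1$, and $0<K+\delta<1$.

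With the polynomial in hand I would invoke the Jury (Schur--Cohn) criterion for a quartic, which is necessary and sufficient for all roots to lie in the open unit disk. The two evaluation conditions are immediate: $p(1)=1-\delta-K=\mu_a\ln R_0>0$ by $R_0>1$, and $p(-1)=1+\delta-K>0$ since $K<1\le 1+\delta$; the constant-term condition $|K|<1$ holds because $0<K<1$. The hard part will be the two inner-determinant conditions of the Jury array. Assembling the array for $p$, these reduce to $|K^2-1|>K\delta$ and, after factoring, to $(1-K)^2(1+K)>K\delta^2$. I expect both to follow cleanly from the bounds just established: the first is $K(K+\delta)<1$, which holds since $K<1$ and $K+\delta<1$; and for the second I would use $\delta<1-K$ (equivalent to $K+\delta<1$) to obtain the chain $K\delta^2<K(1-K)^2<(1+K)(1-K)^2$. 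Confirming that the Jury array has been built correctly—and that the signs of its entries are controlled by $0<K,\delta<1$—is the main place where care is needed; once that is in place, all five Jury inequalities hold and local asymptotic stability follows.
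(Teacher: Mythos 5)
Your proposal is correct, but it reaches the conclusion by a genuinely different route from the paper. The paper stops at the structure of the Jacobian: it notes that the two hypotheses $c_1A_2^*<1$ and $\mu_a(1+c_2A_2^*)<1$ are exactly what is needed for $J(\mathbf{E}^*)$ to be non-negative, irreducible, and primitive, and then leans on the Perron--Frobenius/net-reproductive-number framework of Cushing and Zhou to conclude that the dominant eigenvalue lies below one; the final spectral estimate is delegated to the cited reference rather than carried out. You instead compute the characteristic polynomial explicitly, $p(\lambda)=\lambda^4-\delta\lambda^3-K$ with $\delta=1-\mu_a(1+c_2A_2^*)$ and $K=\mu_a(1-c_1A_2^*)$ (both of which I verified, using $A_1^*e^{-c_2A_2^*}=\mu_aA_2^*$ and $e^{-(c_1+c_2)A_2^*}=1/R_0$), observe the identity $K+\delta=1-\mu_a\ln R_0$, and run the Jury test; your translations of the two bounds into $K>0$ and $\delta>0$, and your reductions of the inner-determinant conditions to $K(K+\delta)<1$ and $(1-K)^2(1+K)>K\delta^2$, are all correct given $0<K$, $0<\delta$, $K+\delta<1$. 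What your approach buys is a self-contained, fully elementary proof that does not invoke positivity theory; what it costs is the bookkeeping of the Jury array. One simplification worth noting: once you know $K,\delta\ge 0$ and $K+\delta<1$, the Jury table is unnecessary --- if $p(\lambda)=0$ with $|\lambda|\ge 1$, then $|\lambda|^4=|\delta\lambda^3+K|\le(\delta+K)|\lambda|^3<|\lambda|^3\le|\lambda|^4$, a contradiction, so every root already lies in the open unit disk and local asymptotic stability follows at once.
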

\begin{proof}
    To derive conditions for local stability, we linearize around the positive steady state \cite{cushing_yicang_1994} and observe that the Jacobian of the system  \eqref{eqn:LPAA} takes the form
    \begin{equation*}
        J(L^*,P^*,A_1^*,A_2^*)= \begin{bmatrix}
            0 & 0 & 0 & L^*\left(\frac{1}{A_2^*}-c_1\right)\\
            1-\mu_l &0&0&0\\
            0& 1-\mu_p &0&0\\
            0 & 0 & \mu_a\frac{A_2^*}{A_1^*} & 1-\mu_a(1+c_2A_2^*)\\
        \end{bmatrix}.
    \end{equation*} 
This Jacobian will be non-negative, irreducible, and primitive (again the sixth power) given that
    $$L^*\left(\frac{1}{A^*_2}-c_1\right)>0 \quad \text{and}\quad 1-\mu_a(1+c_2A_2^*)>0.$$
The first inequality leads to $A_2^*<1/c_1$, while the latter implies
    $$A_2^*<\frac{1}{c_2}\cdot\frac{1-\mu_a}{\mu_a}.$$
Thus,
    $$R_0 < \min\left\{\exp\left(1 + \frac{c_2}{c_1}\right),\exp\left(\frac{1-\mu_a}{\mu_a}\left(1 + \frac{c_1}{c_2}\right)\right)\right\}.$$
It should be noted that these results are analogous to those of Cushing and Zhou for the LPA model \cite{cushing_yicang_1994}.
\end{proof}

\section{Global stability of the positive steady state}
We show global stability for the positive equilibrium following an argument similar to Kuang and Cushing \cite{kuang_1996}. Specifically, we aim to use the result proved by Hautus and Bolis \cite{emerson1979} and restated in Kuang and Cushing \cite[Theorem 1.1]{kuang_1996}. The spirit of this theorem relies on showing that the dynamical system, when constrained to a region $D$, is monotone with only one attracting steady state. 

We follow \cite{kuang_1996} and convert the LPAA model \eqref{eqn:LPAA} to a single discrete delay equation for $t\geq3$:
    \[A_2(t+1) = bA_2(t-3)\left(1-\mu_l\right)\left(1-\mu_p\right)e^{-c_1A_2(t-3)-c_2A_2(t)} + \left(1-\mu_l\right) A_2(t).\]
For convenience, we define
    \begin{alignat*}{2}
        \beta&:= b\left(1-\mu_l\right)\left(1-\mu_p\right), \quad &&\beta>0,\\
        x_t &:= A_2(t+3),  &&t\geq-3.
    \end{alignat*}
This yields
    \begin{equation}\label{eqn:xt}
        x_{t+1} = \left(1-\mu_a\right) x_t + \beta x_{t-3}e^{-c_1 x_{t-3} - c_2 x_t},  \quad t\geq0
    \end{equation}
with initial conditions $(L(0),P(0),A_1(0),A_2(0))^T$ converted to initial history
    \begin{align*}
        x_{-3} &= A_2(0) >0,\\
        x_{-2} &= A_2(1) = A_1(0)e^{-c_2A_2(0)} + \left(1-\mu_a\right)A_2(0),\\
        x_{-1} &= A_2(2) = P(0)\left(1-\mu_p\right)e^{-c_2A_2(1)} + \left(1-\mu_a\right)A_2(1),\\
        x_0 &= A_2(3) = L(0)\left(1-\mu_l\right)\left(1-\mu_p\right)e^{-c_2A_2(2)} + \left(1-\mu_a\right)A_2(2).
    \end{align*}
The difference-delay model \eqref{eqn:xt} attains a trivial steady state at $x^0 = 0$ and
    \[x^* = \frac{1}{c_1 + c_2}\ln{\left(\frac{\beta}{\mu_a} \right)},\]
which exists and is unique when $\beta/\mu_a > 1$. It should be noted that $\beta/\mu_a = R_0$ as we have defined previously; we will use the notation $R_0$ going forward. This condition is analogous to what was found by Kuang and Cushing \cite{kuang_1996} for the LPA model where instead they defined $\beta=b\left(1-\mu_l\right)$. Our results on global stability are summarized in the following theorem.

\begin{theorem}
    The positive steady state $E^*$ is globally asymptotically stable when
        \[1<R_0<\min{\left\{e, \frac{ec_1\left(1-\mu_a\right)}{c_2\mu_a} \right\}}.\]
\end{theorem}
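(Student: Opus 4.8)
The plan is to work entirely with the reduced scalar delay equation \eqref{eqn:xt} and verify the hypotheses of the Hautus--Bolis result (Kuang--Cushing, Theorem 1.1), which delivers global attractivity of $x^*$ once we exhibit an invariant, eventually absorbing interval $D$ on which the recurrence is order-preserving and on which the associated diagonal one-dimensional map has $x^*$ as a globally attracting fixed point. Writing the right-hand side of \eqref{eqn:xt} as $f(u,v) := (1-\mu_a)u + \beta v e^{-c_1 v - c_2 u}$ with $u = x_t$ and $v = x_{t-3}$, the two tasks are (i) to produce an interval $D = [0,M]$ that is forward invariant and eventually absorbing, and (ii) to show $f$ is non-decreasing in each argument on $D\times D$. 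I expect the bound $R_0 < e$ to control (i), and the bound $R_0 < \frac{ec_1(1-\mu_a)}{c_2\mu_a}$ to control the $u$-monotonicity in (ii); a pleasant feature of this problem is that each standing hypothesis maps onto exactly one structural requirement.

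For the absorbing set I would use $v e^{-c_1 v} \le \frac{1}{ec_1}$ and $e^{-c_2 u} \le 1$ to get the linear comparison $x_{t+1} \le (1-\mu_a)x_t + \frac{\beta}{ec_1}$, whose iterates give $\limsup_{t\to\infty} x_t \le \frac{\beta}{ec_1\mu_a} = \frac{R_0}{ec_1}$. Thus $D = [0,\, R_0/(ec_1)]$ is eventually absorbing, and the assumption $R_0 < e$ is precisely what places $D$ inside $[0,1/c_1]$. This is essential, because $\partial_v f = \beta e^{-c_2 u - c_1 v}(1 - c_1 v) \ge 0$ holds exactly when $v \le 1/c_1$, so on $D$ the map is non-decreasing in the delayed argument. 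For monotonicity in $u$, I compute $\partial_u f = (1-\mu_a) - c_2\beta v e^{-c_1 v - c_2 u}$ and bound the Ricker factor by its global maximum, $c_2\beta v e^{-c_1 v - c_2 u} \le \frac{c_2\beta}{ec_1}$, so that $\partial_u f \ge 0$ whenever $\frac{c_2\beta}{ec_1} \le 1-\mu_a$, i.e.\ exactly when $R_0 \le \frac{ec_1(1-\mu_a)}{c_2\mu_a}$. Hence both hypotheses translate directly into the order-preserving property the theorem demands.

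It then remains to handle the diagonal map $G(x) := f(x,x) = (1-\mu_a)x + \beta x e^{-(c_1+c_2)x}$. A direct factorization gives $G(x) - x = x\bigl(\beta e^{-(c_1+c_2)x} - \mu_a\bigr)$, which is positive for $0 < x < x^*$ and negative for $x > x^*$ since $x^* = \frac{1}{c_1+c_2}\ln(\beta/\mu_a)$; combined with $G$ being non-decreasing on $D$ (already established on the diagonal), this shows $x^*$ is the unique fixed point in $D$ and is globally attracting for the scalar iteration $x \mapsto G(x)$. With invariance, order-preservation, and the attracting diagonal fixed point in place, the Hautus--Bolis/Kuang--Cushing theorem yields $x_t \to x^*$, i.e.\ $A_2(t) \to A_2^*$, for every positive initial history. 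Feeding $A_2(t) \to A_2^*$ back through the first three equations of \eqref{eqn:LPAA} and invoking continuity then gives $L(t) \to L^*$, $P(t) \to P^*$, and $A_1(t) \to A_1^*$, so $\mathbf{E}^*$ is globally asymptotically stable on the interior.

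I expect the main obstacle to be the careful construction and justification of the absorbing interval: one must argue not merely that orbits are bounded (which the earlier Lemma already provides) but that they are eventually trapped inside $[0,\, R_0/(ec_1)] \subseteq [0,1/c_1]$, because the order-preserving structure---and hence the whole monotone-comparison argument---fails past the peak $1/c_1$ of the Ricker nonlinearity. Confirming that the Hautus--Bolis hypotheses hold on exactly this interval, rather than on a region where the delayed-argument monotonicity could break down, is the delicate point; I note as a consistency check that $R_0 < e$ automatically forces $A_2^* < 1/c_1$, recovering the local-stability restriction found earlier.
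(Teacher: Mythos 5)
Your proposal is correct and follows essentially the same route as the paper's proof: reduction to the scalar delay equation, the $\limsup x_t \le \beta/(ec_1\mu_a)$ bound to trap orbits in $[0,1/c_1]$ under $R_0<e$, the same partial-derivative computations to verify monotonicity (with the second hypothesis controlling $\partial_u f\ge 0$), and the Hautus--Bolis/Kuang--Cushing theorem to conclude. Your explicit factorization $G(x)-x=x\bigl(\beta e^{-(c_1+c_2)x}-\mu_a\bigr)$ and the final back-substitution into the $L$, $P$, $A_1$ equations are slightly more careful than the paper's treatment of those steps, but they are refinements of the same argument rather than a different approach.
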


\begin{proof}
We assume that $R_0>1$ so that this steady state exists. Before proceeding, we aim to bound the solutions of \eqref{eqn:xt} from above. For $t\geq0$, observe that
    \[x_{t+1} \leq \left(1-\mu_a\right) x_t + \beta x_{t-3}e^{-c_1 x_{t-3}}.\]
The function $xe^{-c_1 x}$ attains a maximum of $1/\left(ec_1\right)$ at $x = 1/c_1$, so then
    \[x_{t+1} \leq \left(1-\mu_a\right) x_t + \frac{\beta}{ec_1}.\]
Iterating through the map from initial condition $x_0$, we have
    \[x_{t+1} \leq \left(1-\mu_a\right)^{t+1} + \frac{\beta}{ec_1\mu_a}\left(1 - \left(1-\mu_a\right)^{t+1} \right).\]
Since $\left(1-\mu_a\right)<1$ by definition,
    \begin{equation}\label{eqn:globalsup}
        \limsup_{t\to\infty} x_t \leq \frac{\beta}{ec_1\mu_a}.
    \end{equation}
We require some additional preliminaries in order to apply the desired result \cite{emerson1979,kuang_1996}. Define
    \[F(x_t,x_{t-1},x_{t-2},x_{t-3}) = \left(1-\mu_a\right) x_t + \beta x_{t-3}e^{-c_1 x_{t-3} - c_2 x_t},  \quad t\geq0\]
and $G(u) = F(u,u,u,u) - u$ with $u>0$ and $u\neq x^*$. Because $\lim_{u\to\infty}G(u)<0$ and $G(u)=0$ if and only if $u=0$ or $u = x^*$, it is true that $\left(u - x^*\right)\left[F(u,u,u,u) - u\right]<0$ as desired.

To show that $F$ is increasing in each of its arguments, we rewrite our difference-delay equation to include the transition compartments. We differentiate to obtain
    \begin{align*}
        \frac{\partial F}{\partial x_t} &= \left(1-\mu_a\right) - \beta c_2 x_{t-3} e^{-c_1x_{t-3} - c_2 x_t},\\
        \frac{\partial F}{\partial x_{t-1}} &= 0,\\
        \frac{\partial F}{\partial x_{t-2}} &= 0,\\
        \frac{\partial F}{\partial x_{t-3}} &= \beta\left(1 - c_1 x_{t-3}\right)e^{-c_1x_{t-3} - c_2 x_t}.\\
    \end{align*}

Assume that $\beta<\min\left\{e\mu_a, \frac{\left(1-\mu_a\right) ec_1}{c_2}\right\}$. Then there exists $T>3$ such that for $t>T$, $x_{t-3}<1/c_1$. We consider the region $D = \left( 0,1/c_1\right)^4$. Observe that $x^*\in\left( 0,1/c_1\right)$ and $\partial F/ \partial x_{t-i}\geq0$ for $i = 1,2,3$. We now focus on $\partial F/\partial x_t$. When $x_{t-3}\geq0$,
    \[x_{t-3}e^{-c_1 x_{t-3} - c_2 x_t} \leq x_{t-3}e^{-c_1 x_{t-3}} \leq \frac{1}{ec_1}.\]
The above implies that, for $x_t,x_{t-3}\geq0$,
    \[\frac{\partial F}{\partial x_t} \geq \left(1-\mu_a\right) - \frac{\beta c_2}{ec_1}\geq0\]
by assumption. This shows that, when restricted to the region $D$, $F$ is strictly increasing in its arguments. Hence we may apply the theorem proven by \cite{emerson1979}. Recalling that we require $R_0=\beta/\mu_a>1$, we have shown that the positive steady state is globally stable when
    \[\mu_a<\beta<\min{\left\{e\mu_a, \frac{ec_1\left(1-\mu_a\right)}{c_2} \right\}}.\] It should be noted that, in the case $c_2$ is small, the interval for global stability reduces to $1<R_0<e$. 
\end{proof}

\cref{fig:positiveLAS}(a) highlights the stability regions for the extinction and positive steady states as a function of adult mortality ($\mu_a$) and the natural logarithm of larval recruitment in the absence of cannibalism ($\ln{b}$). Numerical simulations corroborate our analytical findings for stability (right panels of \cref{fig:positiveLAS}). \cref{fig:positiveLAS}(a) shows sustained limit cycles outside of our analytical stability regions.

\begin{figure}
    \centering
    \includegraphics[width=\textwidth]{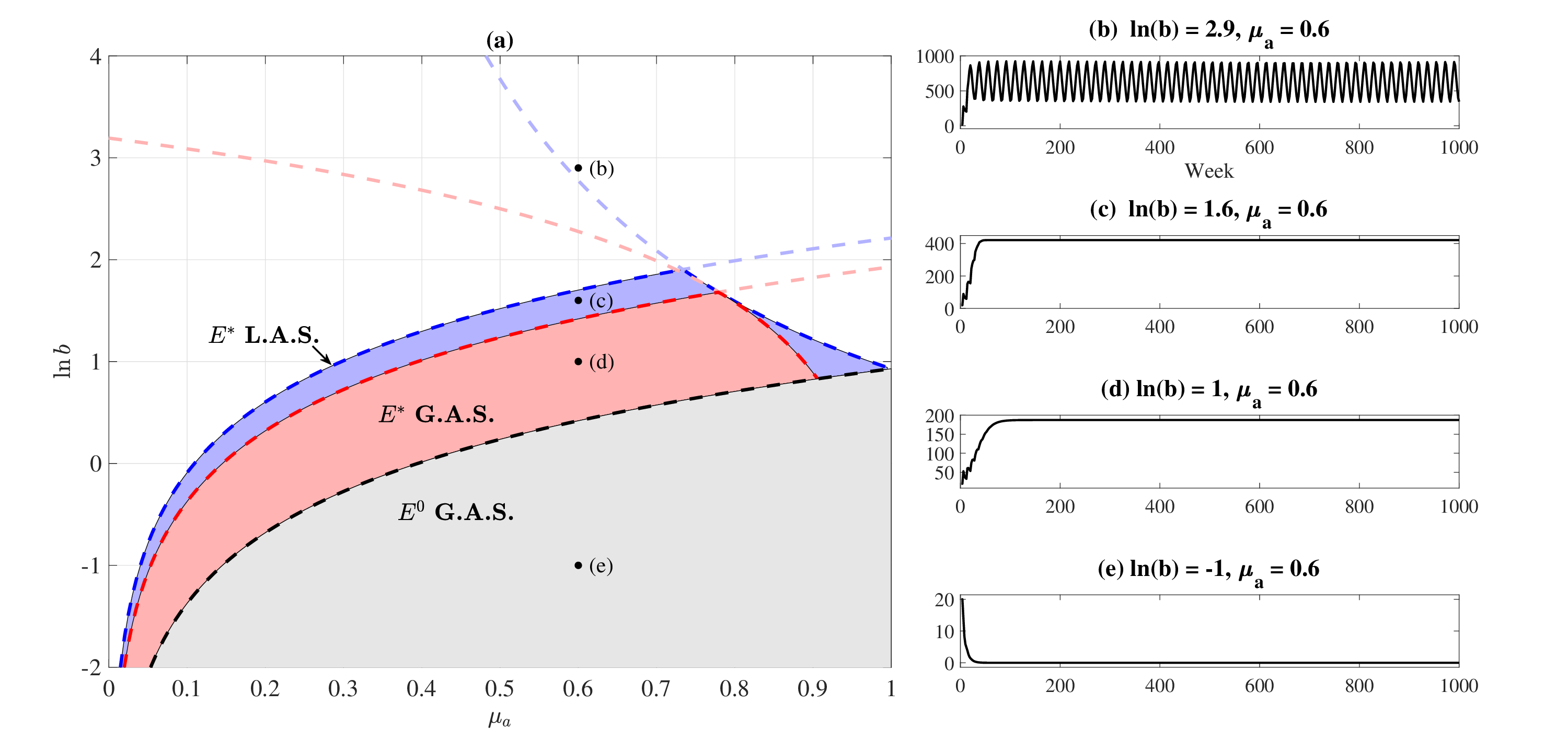}
    \caption{(a) Stability regions for steady states of the LPAA model \eqref{eqn:LPAA}, with the hypothesized upper bound for local stability in the special case when $c_2=0$. (b)--(e) show the total \textit{Tribolium} population for the indicated parameter values with $c_2 = 0$, $\mu_l = 0.6053$, and $\mu_p = 0$, similar to our fitted parameters listed in \cref{tab:param}.}
    \label{fig:positiveLAS}
\end{figure}

%%%%%%%%%%%%%%%%%%%%%%%%%%%%%%%
\section{Bifurcations}
We explore the bifurcation behavior of the LPA \eqref{eqn:LPA} and LPAA models \eqref{eqn:LPAA} numerically through simulation and Lyapunov exponent calculation to investigate if chaos is inherent in our laboratory populations. The Lyapunov exponent often acts as an indicator for chaotic behavior, although not sufficient \cite{cushing_2001}; specifically, strictly positive Lyapunov exponents are taken to denote chaos.  We follow the algorithm described in \cite[see appendix]{dennis_2001}, which we briefly restate here. In theory, the Lyapunov exponent $\lambda$ may be computed as
    \[\lambda = \lim_{t\to\infty} \frac{1}{t}\ln{\left||\mathbf{J}_t \mathbf{J}_{t-1}\dots \mathbf{J}_1\right||}\]
where $\mathbf{J}_t$ denotes the Jacobian of the system evaluated at time $t$. Here,
    \[\mathbf{J}_t = \begin{bmatrix}
              0 & 0 & 0 & b\left(1-c_1A_2(t)\right)e^{-c_1 A_2(t)}\\
              1-\mu_l & 0 & 0 & 0\\
              0 & 1-\mu_p & 0 & 0\\
              0 & 0 & e^{-c_2 A_2(t)} & -c_2A_1(t)e^{-c_2 A_2(t)}+(1-\mu_a)
              \end{bmatrix}.\]
However, the matrix multiplication may be numerically unstable. Following \cite{dennis_2001}, we rescale the matrix multiplication. A sequence of scalars $s_t$ is chosen to be $s_t = \left|| \mathbf{J}_t \mathbf{S}_{t-1}\mathbf{S}_{t-2}\dots\mathbf{S}_t\right||$ with $\mathbf{S}_t = \mathbf{J}_t/s_t$. The scalars are initialized with $s_1 = \left|| \mathbf{J}_1\right||$ and $\mathbf{S}_1 = \mathbf{J}_1/s_1$. Then, the Lyapunov exponent may be computed as
    \[\lambda = \frac{1}{t}\sum_{i=1}^t \ln{\left(s_i\right)}.\]
Further details on the calculations involved may be found in \cite[Appendix]{dennis_2001}. Our procedure for the bifurcation and Lyapunov exponent diagrams is as follows:

\begin{enumerate}
    \item Set the value for the chosen bifurcation parameter.
    \item Simulate the model \eqref{eqn:LPAA} for $50,000$ time steps in order to remove transients.
    \item For the bifurcation diagram, plot the last 100 iterations of the simulation for the given parameter value.
    \item For the Lyapunov exponent diagram, compute the Jacobian of the model evaluated at the current time step $\mathbf{J}_t$ and follow the computation above as stated in \cite[Appendix]{dennis_2001}.
    \item Update the bifurcation parameter, continue steps 2-4.
\end{enumerate}

\cref{fig:b} shows the dynamics of the model with the recruitment rate $b$ as the bifurcation parameter using parameter values obtained from fitting to the experimental data (see \cref{tab:param}). Both the LPA model (\cref{fig:b}(a)) and the LPAA model (\cref{fig:b}(b)) go to the positive steady state, with the Lyapunov exponents remaining negative and thus indicating non-chaotic behavior.

When the strength of cannibalism of eggs by adults is varied, we see that both models again go to a steady state around our experimental parameterization (\cref{fig:cegg}). No levels of egg cannibalism here show clear indication of chaotic behaviors, with Lyapunov exponents remaining negative.

Costantino and colleagues found chaotic dynamics, again indicated by positive Lyapunov exponent, when the intensity of cannibalism on pupae $c_3$ is between 0.1 and 0.5 in their experimental parameterization \cite{costantino_1997}. Because we do not incorporate cannibalism on pupae but instead on newly emerged adults, we instead vary $c_2$ for comparison. \cref{fig:cadult} indicates no chaotic dynamics in our experiments as the cannibalism intensity on new adults increases.

The clearest chaotic dynamics are shown when $\mu_a$, or adult mortality, is varied, as shown in \cref{fig:mua}. The LPA model indicates potentially chaotic dynamics (\cref{fig:mua}(a)), while the LPAA model goes to the positive steady state. In Costantino et al. \cite{costantino_1997}, the chaotic cloud was observed for $\mu_a>0.96$ based on their experimental data, while we do not observe chaotic dynamics in any region of $\mu_a$ for our model.

Biologically, these figures could indicate that chaos is not an inherent characteristic of \textit{Tribolium confusum}, but rather a response to environmental changes. While this conclusion is not surprising, our use of a different experimental setup and model strengthen the statement.

\begin{figure}
    \centering
    \includegraphics[width=\textwidth]{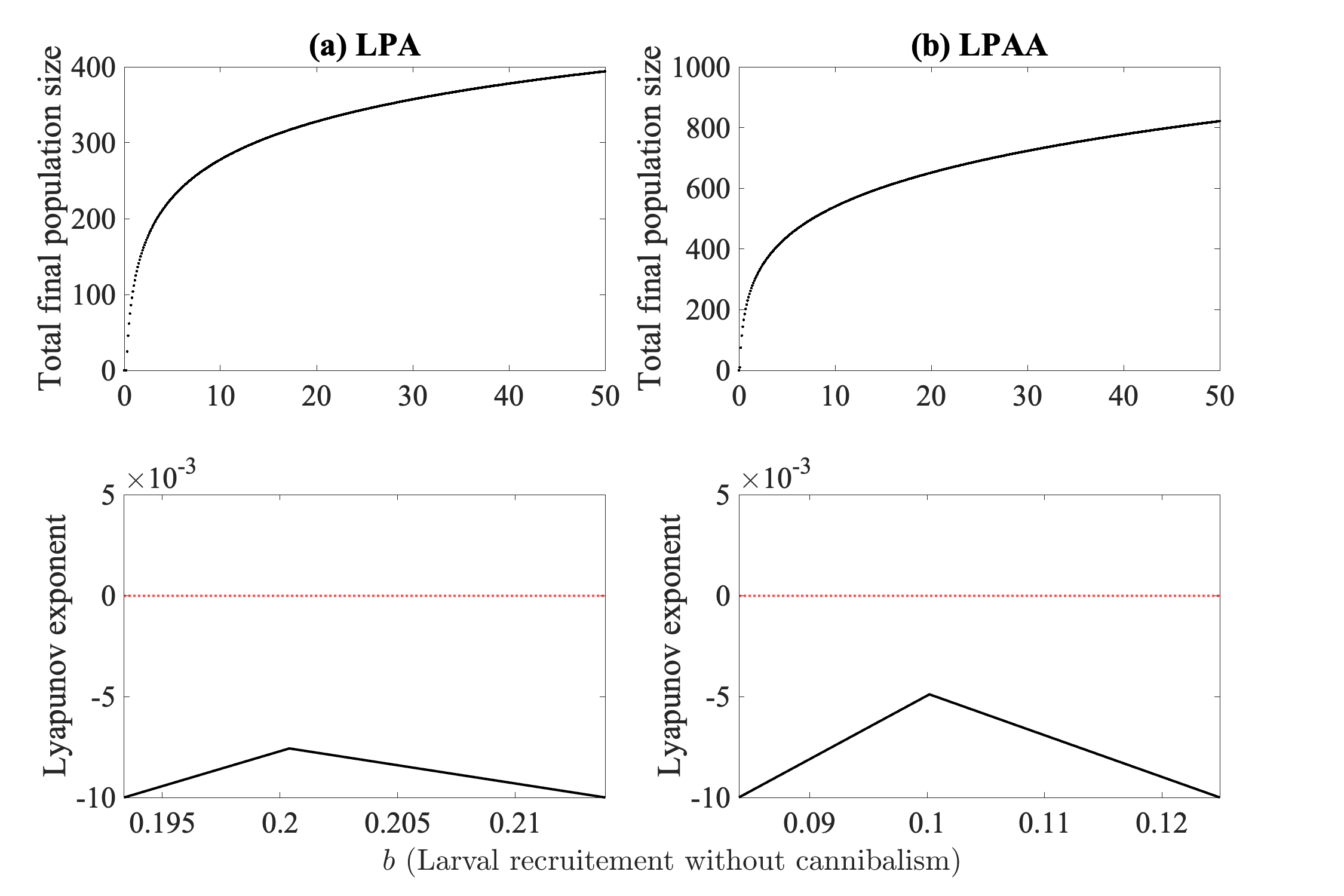}
    \caption{Bifurcation and Lyapunov exponent diagrams for (a) the LPA model and (b) the LPAA model where  larval recruitment in the absence of cannibalism is used as the bifurcation parameter. Median values obtained from fitting are used for all other parameters and are listed in \cref{tab:param}.}
    \label{fig:b}
\end{figure}

\begin{figure}
    \centering
    \includegraphics[width=\textwidth]{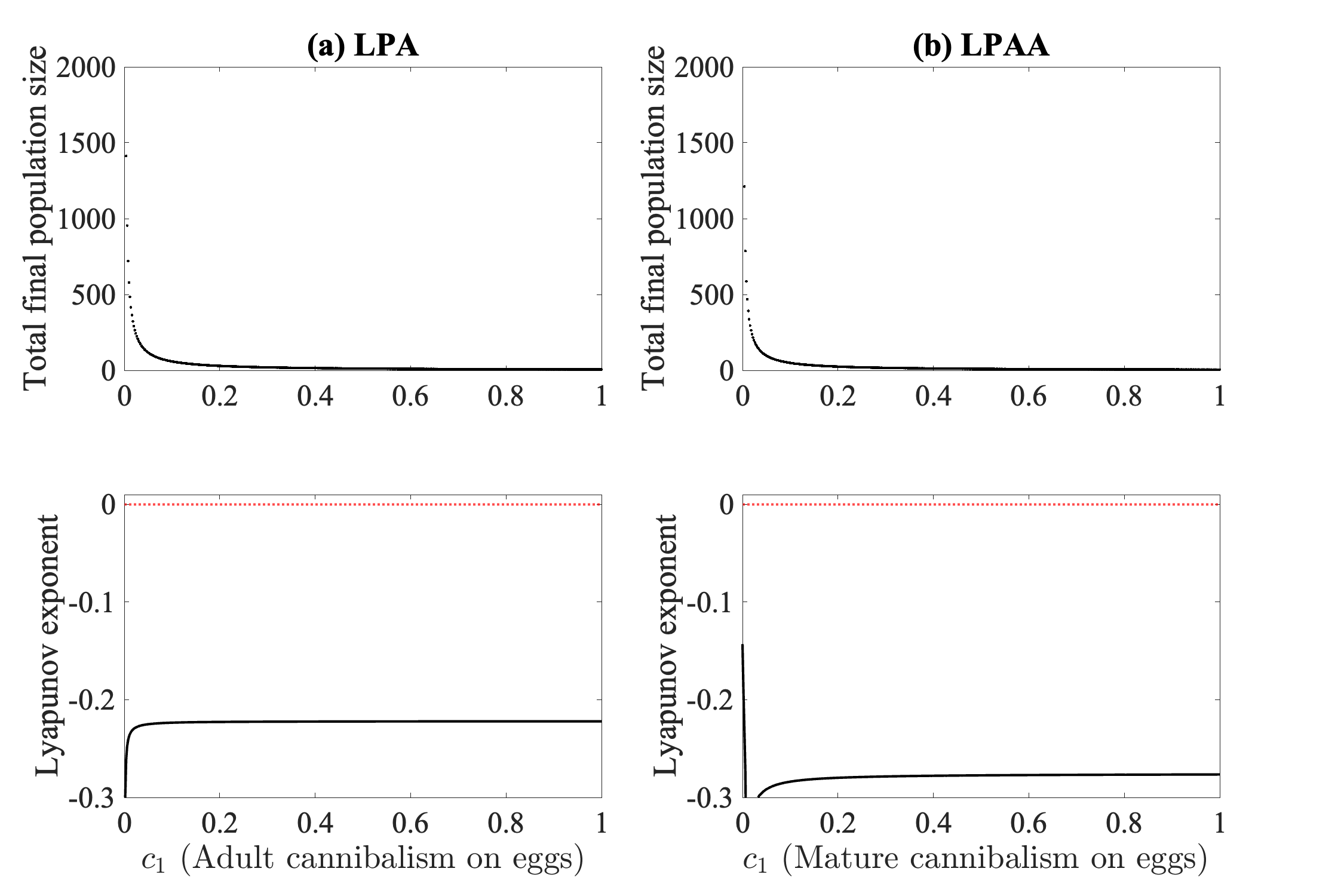}
    \caption{Bifurcation and Lyapunov exponent diagrams for (a) the LPA model and (b) the LPAA model where the cannibalism of eggs by adults is used as the bifurcation parameter. Median values obtained from fitting are used for all other parameters and are listed in \cref{tab:param}.}
    \label{fig:cegg}
\end{figure}

\begin{figure}
    \centering
    \includegraphics[width=\textwidth]{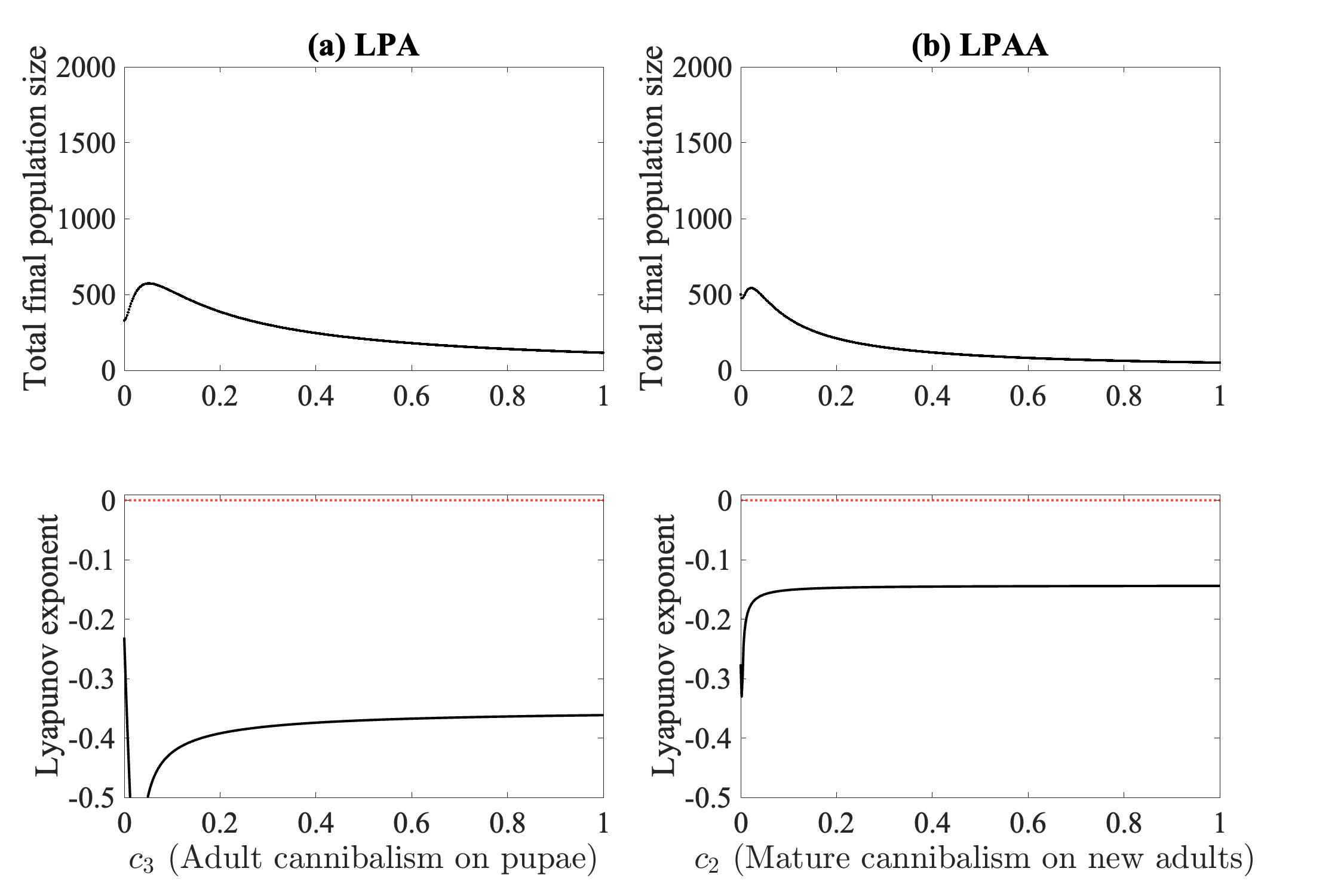}
    \caption{Bifurcation and Lyapunov exponent diagrams for (a) the LPA model and (b) the LPAA model where the cannibalism of the new adult cohort is used as the bifurcation parameter. Median values obtained from fitting are used for all other parameters and are listed in \cref{tab:param}.}
    \label{fig:cadult}
\end{figure}

\begin{figure}
    \centering
    \includegraphics[width=\textwidth]{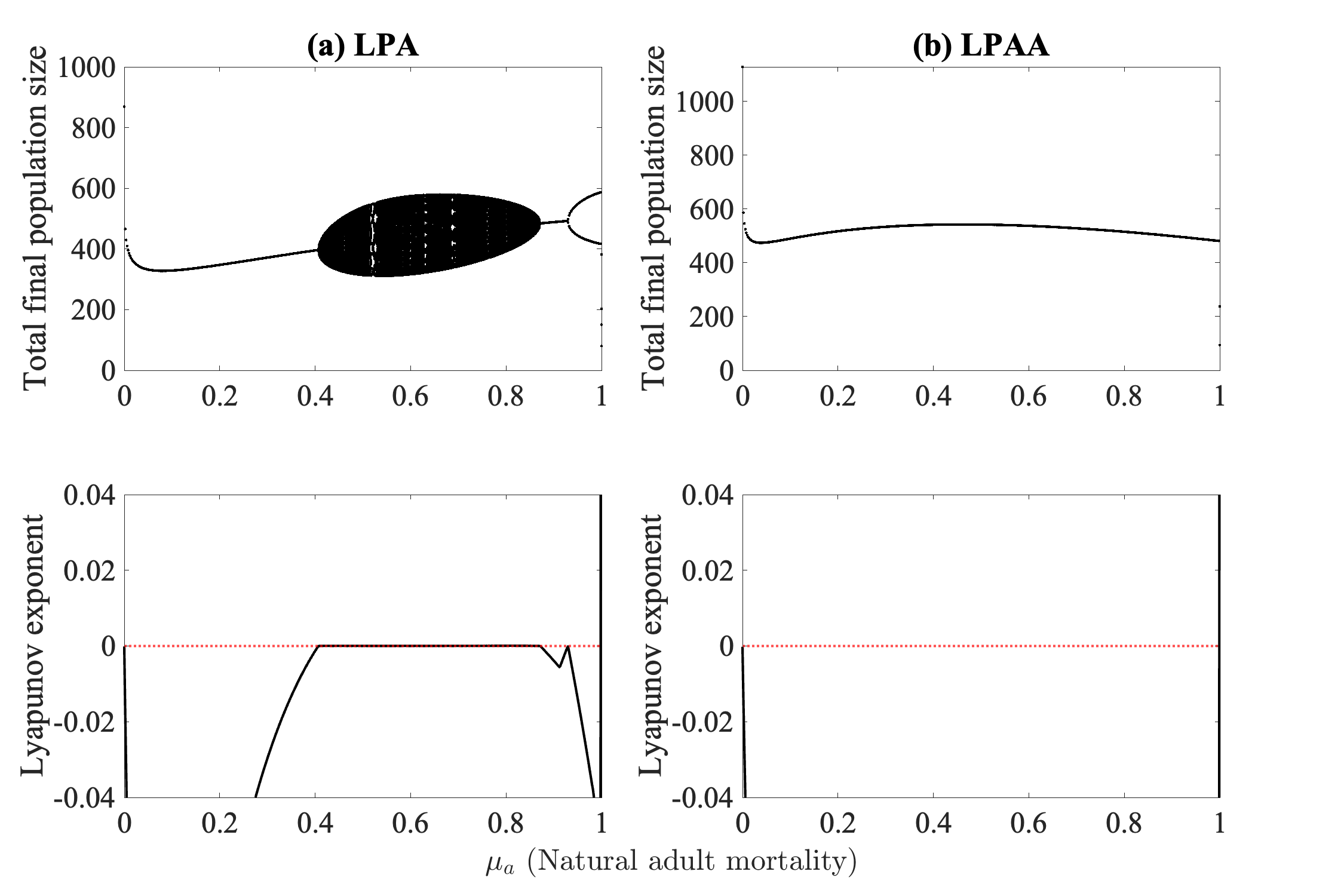}
    \caption{Bifurcation and Lyapunov exponent diagrams for (a) the LPA model and (b) the LPAA model where the proportion of adult lost due to natural causes is used as the bifurcation parameter. Median values obtained from fitting are used for all other parameters and are listed in \cref{tab:param}.}
    \label{fig:mua}
\end{figure}

\section{Discussion}
This work has contributed to the study of \textit{Tribolium} by extending the famous LPA model through the stratification of adults into newly emerged and reproductively mature. Although a four-dimensional matrix model, it was mathematically tractable. Numerical experiments support previous findings that chaos is not a natural characteristic of \textit{Tribolium} beetles and must be induced. Given their status as a grain pest, chaos may be observed as a result of pest management strategies pushing parameter values to chaotic regions (for example, sufficiently increasing adult mortality).

Initially, we cultured several \textit{Tribolium confusum} populations in order to understand population responses to varying environments. We found that the LPA model did not adequately describe our experiments, and a modified version of the LPA model, which we call the LPAA model, better recapitulated our time-series data. We added an additional equation for newly emerged adults motivated by their reduced fecundity. Our resulting model suggested that newly emerged adults may play an important role in flour beetle dynamics.

We analyzed the stability of the steady states of the LPAA model. The global stability results for the extinction and positive steady states were analogous to that of the LPA model. Bifurcations of the LPAA model were explored numerically, showing a distinct lack of chaos in regions in which the LPA model exhibited chaos. Analytic conditions for the stability of limit cycles is unknown. The parameterization obtained from fitting suggests that chaos is not inherent to our flour beetle population and must be induced.

We hypothesize the differences between the LPA and LPAA models are due to using different species of \textit{Tribolium} as well as different experimental setups. Aside from differing initial experimental goals, abiotic factors such as temperature and relative humidity may have a role. In addition, some data collected by the original Beetle Team categorized small larvae into the``larval stage" and large larvae and pupae (and potentially callow adults) into the ``pupal stage" in order to match the time discretization of the LPA model (for example, \cite{desharnais_1980} describes this stratification, but does not provide cutoffs for differentiating small and large larvae). As well, it is unclear exactly how media changes impact the population. Flour beetles condition the media via a buildup of feces, pheromones, and other excretions \cite{flinn_2012}. The conditioning of the flour has been shown to affect cannibalism and oviposition \cite{flinn_2012,sonleitner_1991}. The media changes  effectively ``reset" the flour and potentially influence the population dynamics. Media changes and censusing the population results in the loss of eggs with downstream effects on the later life stages, with the latter issue being particularly difficult to circumvent. While Costantino and colleagues changed media with each census (and thus every two weeks) \cite{costantino_1995}, media changes took place every eight weeks in our experiments. Hence, we hypothesize that media changes as well as censusing contribute to the oscillations, and potentially the chaos, observed in \textit{Tribolium}. 

This model is a direct extension of the groundbreaking work done by Cushing, Costantino, Dennis, Desharnais, Henson, and others; they have surely cemented the place of the LPA model and \textit{Tribolium} in mathematical history. Flour beetles provide an excellent source of interesting nonlinear interactions between life stages and are amenable to experimentation. These insects have a long history with humans and long future with scientists as well.

%%%%%%%%%%%%%%%%%%%%%%%%%%%%%%%%%%%%%%%
\FloatBarrier
\appendix
\section{Experimental methods}\label{app:exp}

Over the course of 20 weeks, we investigated the influence of salts and large inorganic compounds at small concentrations on the population dynamics of \textit{Tribolium confusum}. Large molecules like organophosphates are known to cause desiccation by blocking quinone glands, and salts tend to inhibit water retention \cite{lemon_1966}.

All animals used in this study were obtained from a stock colony of \textit{T.\ confusum} maintained at Scottsdale Community College. The stock population was founded 15 years prior to the start of this study from a small colony (approximately 100 beetles) obtained from Ward's Science in Rochester, New York and continuously maintained using standard culturing procedures for this species \cite{park_1934}. Beetles were reared on unbromated whole wheat flour and baker's yeast media (95:5 by mass) in 3.3~L food-grade containers modified to allow air flow. The stock population was continuously incubated in the dark at approximately $28\pm 2^\circ$C. The stock population experienced a bottleneck during the COVID pandemic with population density in 2021 reaching perhaps 10\% of its maximum. Robust populations were reestablished following the bottleneck and provided beetles for these experiments.

Sample colonies were established at time zero with 20 sclerotized adults cultured in 25 grams of flour with 5\% yeast media and held in 118~mL finely vented food storage containers. The study design comprised of two independent experiments testing effects of variations in media on population dynamics. Both experiments were fully crossed, two-factor, balanced designs with $n = 3$ sample cultures in each treatment. In both experiments, flour type---bleached and unbleached---was one factor. The other factor tested lethality of particular salts. In experiment one, the second factor was dipotassium phosphate (K$_2$PO$_4$), with levels 0.5\% and 1\% by mass added to the media. In experiment two, concentration of sodium nitrate (NaNO$_3$) was the second factor, with the same levels.

Populations were censused every two weeks with media changes every eight weeks. Adults, pupae, and larger larvae were sifted out of the media with a fine flour sifter. Larvae were removed promptly from the sifter and remaining individuals were to be sorted by length through a threshold of approximately 2.75 to 3.00 mm. If larvae exceed the threshold, they were considered large enough to count. Pupae were manually separated and counted by hand. Adults are sorted into living and dead categories, photographed, and recorded with ImageJ software \cite{schneider_2012}. Callow (unsclerotized) adults were counted separately, but the final adult count includes both callow and sclerotized adults. Deceased callow adults were not observed in this experiment. Eggs were not counted because they are laid freely through the media and are white or colorless and therefore blend into the media, and hence unsortable manually \cite{park_1934}. During media changes, half of the media is removed by mass, excluding live adults, pupae, and larvae. Media is then replenished with 12.50 grams of new media mixtures based on trial requirements. Careful measures were taken to ensure as many live larvae were removed from the media prior to recording the media mass for removal. The number of dead adults were documented and included in the discarded media.

\newpage 
\FloatBarrier
\section{QQ plots of residuals}\label{app:qq}
\begin{figure}[h!]
    \centering
    \includegraphics[width=0.8\textwidth]{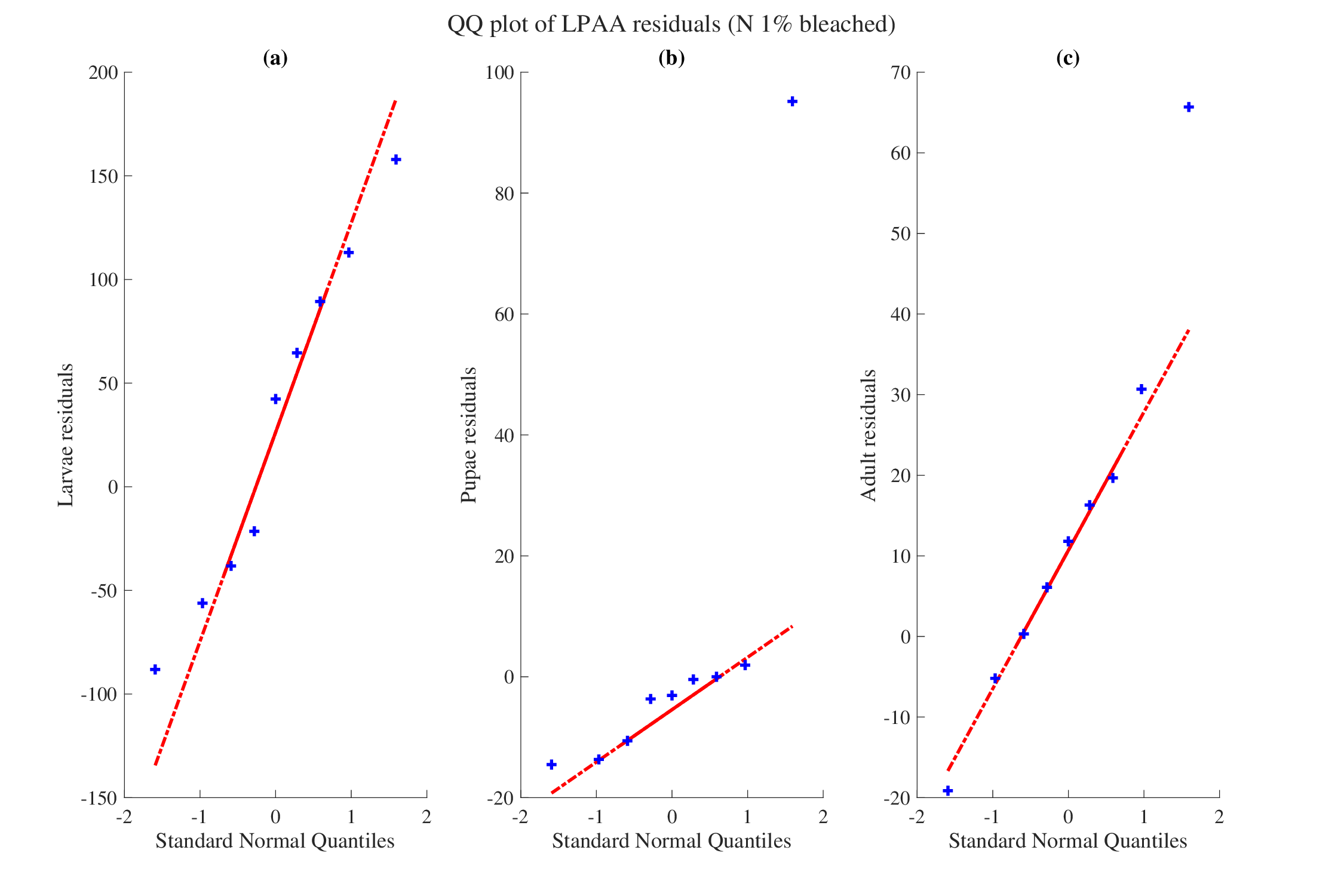}
    \caption{QQ plots of the residuals of the one-step forecasts of the LPAA model for the N 1\% bleached group.}
    \label{fig:qq_Nb1}
\end{figure}

\begin{figure}[h!]
    \centering
    \includegraphics[width=0.8\textwidth]{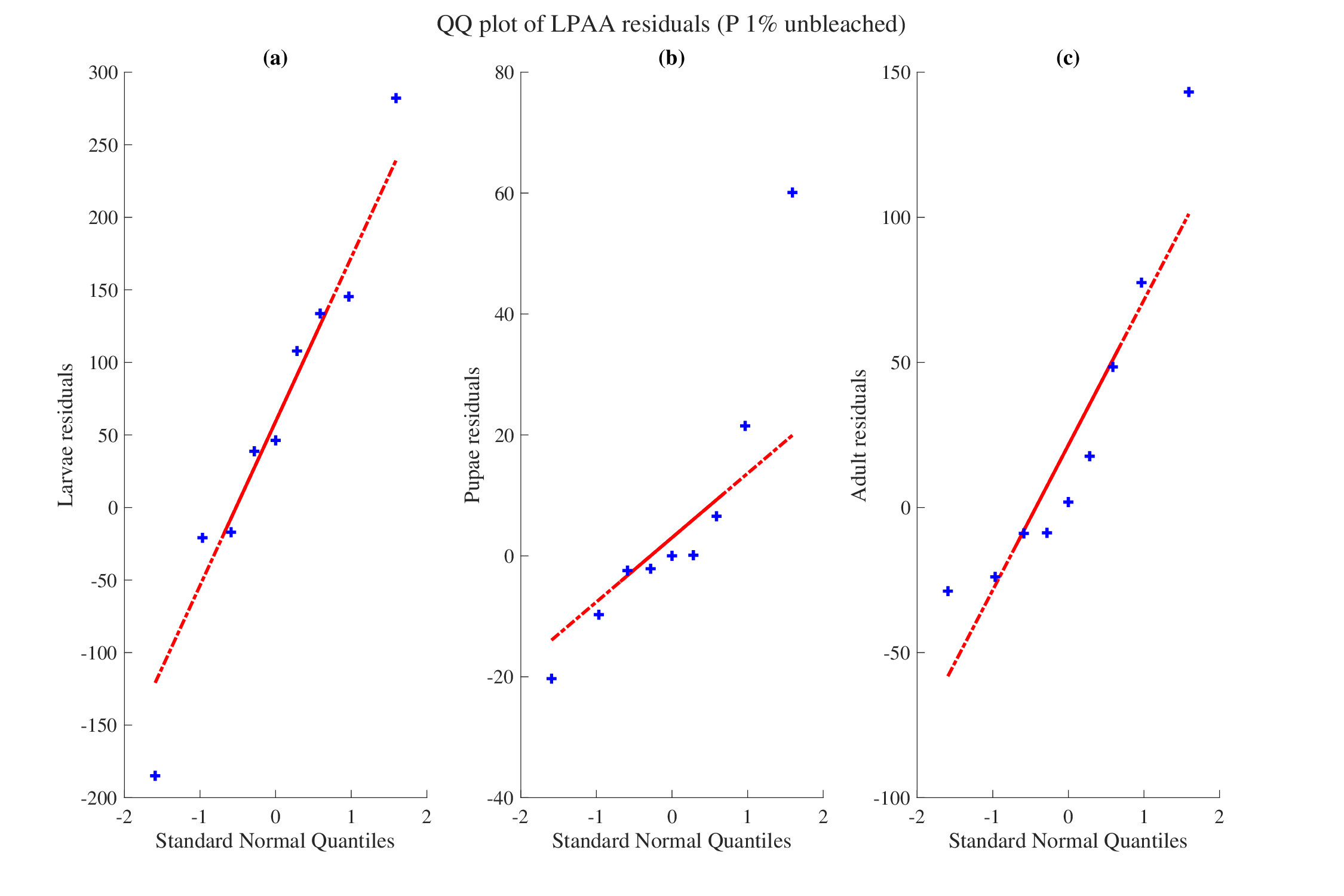}
    \caption{QQ plots of the residuals of the one-step forecasts of the LPAA model for the P 1\% unbleached group.}
    \label{fig:qq_Pub1}
\end{figure}
\FloatBarrier
\section*{Data availability}
All code and data used in this study are available on Github at the following repository: \url{https://github.com/sjbrozak/Tribolium-LPAA}.

\section*{Acknowledgments}
We extend our gratitude and appreciation of the hard work in the lab by Scottsdale Community College's own ``Beetle Team": Amanda Adams, Tatum Shepherd, and Michelle Hosking. We would also like to thank the editors and two reviewers whose feedback strengthened this work; in particular, we thank the first reviewer for their suggestions which sharpened the local stability proofs, and the second reviewer for their suggestions which improved the fittings and explanations of our model.

\bibliographystyle{siamplain}
\bibliography{ref}
\end{document}